\definecolor{darkgreen}{rgb}{0.0,0,0.9}
\theoremstyle{plain}
\providecommand{\customgenericname}{}
\newcommand{\newcustomtheorem}[2]{%
  \newenvironment{#1}[1]
  {%
   \renewcommand\customgenericname{#2}%
   \renewcommand\theinnercustomgeneric{##1}%
   \innercustomgeneric
  }
  {\endinnercustomgeneric}
}
\newlength{\widebarargwidth}
\newlength{\widebarargheight}
\newlength{\widebarargdepth}
\long\def\@makecaption#1#2{
        \vskip 0.8ex
        \setbox\@tempboxa\hbox{\small {\bf #1:} #2}
        \parindent 1.5em  
        \dimen0=\hsize
        \advance\dimen0 by -3em
        \ifdim \wd\@tempboxa >\dimen0
                \hbox to \hsize{
                        \parindent 0em
                        \hfil 
                        \parbox{\dimen0}{\def\baselinestretch{0.96}\small
                                {\bf #1.} #2
                                } 
                        \hfil}
        \else \hbox to \hsize{\hfil \box\@tempboxa \hfil}
        \fi
        }
\long\def\comment#1{}
\newcommand{\Prob}{\ensuremath{{\mathbb{P}}}}
\DeclareSymbolFont{rsfs}{U}{rsfs}{m}{n}
\DeclareSymbolFontAlphabet{\mathscrsfs}{rsfs}
\numberwithin{equation}{section}
\newtheoremstyle{myexample} 
    {\topsep}                    
    {\topsep}                    
    {\rm }                   
    {}                           
    {\bf }                   
    {.}                          
    {.5em}                       
    {}  
\newtheoremstyle{myremark} 
    {\topsep}                    
    {\topsep}                    
    {\rm}                        
    {}                           
    {\bf}                        
    {.}                          
    {.5em}                       
    {}  
\newtheorem{claim}{Claim}[section]
\newtheorem{lemma}[claim]{Lemma}
\newtheorem{theorem}{Theorem}
\newtheorem{definition}[claim]{Definition}
\theoremstyle{myremark}
\tikzset{
data/.style={circle, draw, text centered, minimum height=3em ,minimum width = .5em, inner sep = 2pt},
empty/.style={circle, text centered, minimum height=3em ,minimum width = .5em, inner sep = 2pt},
}
\pgfplotsset{compat=1.5}
\DeclarePairedDelimiterX{\inp}[2]{\langle}{\rangle}{#1, #2}
\newcommand{\G}{{\sf G}}
\newcommand{\PCC}{\emph{Planted Clique Conjecture}}
\newcommand{\A}{\mathcal{A}}
\newcommand{\PCD}{{\sf{PC_D}}(n,k)}
\newcommand{\PCR}{{\sf{PC_R}}(n,k)}
\newcommand{\Binnhalf}{{\sf Bin}\left(n,\frac{1}{2}\right)}
\title{Is the space complexity of planted clique recovery the same as that of detection?}
\author{Jay Mardia\thanks{Department of Electrical Engineering, Stanford
    University. jmardia@stanford.edu}}
\begin{document}

\date{}
%
\maketitle
\begin{abstract}
We study the planted clique problem in which a clique of size $k$ is planted in
an Erd\H{o}s-R\'enyi graph $G(n, \frac{1}{2})$, and one is interested
in either detecting or recovering this planted clique. This problem is interesting because it is widely believed to show a statistical-computational gap at clique size $k=\Theta(\sqrt{n})$, and has emerged as the prototypical problem with such a gap from which average-case hardness of other statistical problems can be deduced. It also displays a tight computational connection between the detection and recovery variants, unlike other problems of a similar nature. This wide investigation into the computational complexity of the planted clique problem has, however, mostly focused on its time complexity. To begin investigating the robustness of these statistical-computational phenomena to changes in our notion of computational efficiency, we ask-
\begin{quote}
\textit{Do the statistical-computational phenomena that make the planted clique an interesting problem also hold when we use `space efficiency' as our notion of computational efficiency?}
\end{quote}
It is relatively easy to show that a positive answer to this question depends on the existence of a $O(\log n)$ space algorithm that can recover planted cliques of size $k = \Omega(\sqrt{n})$. Our main result comes very close to designing such an algorithm. We show that for $k=\Omega(\sqrt{n})$, the recovery problem can be solved in $O\left(\left(\log^*{n}-\log^*{\frac{k}{\sqrt{n}}}\right) \cdot \log n\right)$ bits of space.
\begin{enumerate}
	\item If $k = \omega(\sqrt{n}\log^{(\ell)}n)$ \footnote{Here $\log^{(\ell)}n$ means we repeatedly take the logarithm of $n$ $\ell$ times. For example, $\log^{(3)}n = \log\log\log n$.} for any constant integer $\ell > 0$, the space usage is $O(\log n)$ bits.
	\item If $k = \Theta(\sqrt{n})$, the space usage is $O(\log^*{n} \cdot \log n)$ bits.
\end{enumerate}

Our result suggests that there does exist an $O(\log n)$ space algorithm to recover cliques of size $k =\Omega(\sqrt{n})$, since we come very close to achieving such parameters. This provides evidence that the statistical-computational phenomena that (conjecturally) hold for planted clique time complexity also (conjecturally) hold for space complexity.

\end{abstract}

\setcounter{tocdepth}{2}

\section{Introduction}
\label{sec:Introduction}
The planted clique problem is a well-studied task in average-case computational complexity, in which a clique of size $k$ is planted in an Erd\H{o}s-R\'enyi graph of size $n$, $G(n, \frac{1}{2})$. The problem comes in two flavours, detection ($\PCD$) and recovery ($\PCR$). In the former, we are given either a $G(n, \frac{1}{2})$ graph or a planted clique graph and must identify the graph we have been given. That is, we must detect whether or not the graph has a planted clique. In the latter, we are given a planted clique graph and must recover all the vertices in the clique.

The planted clique problem shows a variety of interesting phenomena in its time complexity. Not only does it exhibit a statistical-computational gap at clique size $k = \Theta(\sqrt{n})$, it has also emerged as the central problem whose average-case hardness implies average-case hardness for many other problems with statistical-computational gaps. See \cite{brennan2018reducibility,brennan2020reducibility} for some examples. Further, the detection and recovery problems have the same threshold at which a polynomial time statistical-computational gap shows up, even though a priori the latter could be a harder problem than the former. In fact, for several other problems such as community detection/recovery in the stochastic block model \cite{abbe2017community} or planted submatrix detection/recovery \cite{hajek2015computational,chen2014statistical}, there does indeed appear to be a difference between the time complexity of detection and recovery. They become polynomial time feasible at different signal-to-noise ratios, and this makes the lack of a gap between detection and recovery in planted clique all the more noteworthy.

Algorithmic progress on planted cliques has shown that both the detection and recovery problems can be solved `computationally efficiently' (i.e. in polynomial time) for large cliques of size $k = \Omega(\sqrt{n})$ and less efficiently in quasi-polynomial time $n^{O(\log n)}$ for cliques larger than the information-theoretic threshold, $k \geq (2+\epsilon) \log n$. The widely believed $\PCC$ even states that if the clique size is small $k = O(n^{\frac{1}{2}-\delta})$ for any constant $\delta >0$, no polynomial time algorithm can solve the planted clique detection (and hence also the recovery) problem. We survey the results providing evidence for this conjecture in Section~\ref{sec:related_work}. 

However, we do not know how robust these statistical-computational phenomena are to changes in our notion of `computational efficiency'. To begin investigating this, we ask the following question:

\begin{quote}
\textit{Do the statistical-computational phenomena that make the planted clique an interesting problem also hold when we use `space efficiency' as our notion of computational efficiency?}
\end{quote}
To answer this question, we must first discuss what a `space efficient' algorithm is. One of the most well studied classes of space bounded computation is that of logarithmic space, and it is widely considered a benchmark of `space efficient' computation.

Let us further motivate this target space complexity. For deterministic algorithms, this is the class that runs using $O(\log n)$ bits of space on inputs of size poly($n$). It is well known that any deterministic algorithm that uses at most $s(n)$ bits of space must also run in time $2^{O(s(n)+\log n)}$ \cite[Theorem 4.3]{arora2009computational}\footnote{Strictly speaking, the theorem we point to relating deterministic space complexity to time complexity \cite[Theorem 4.3]{arora2009computational} is for Turing machines. While it is convenient to define computational complexity classes using Turing machines, it is extremely inconvenient to design algorithms using them. Instead, we work with a slightly stronger model of computation that allows random access to the input to make algorithm design reasonable. However, the idea behind \cite[Theorem 4.3]{arora2009computational} also holds in any reasonable RAM model and so we ignore this distinction for the purposes of our discussion.}. This means that deterministic logspace algorithms are a subset of polynomial time algorithms, lending support to the belief that they are a good proxy for `efficient' computation. For algorithms that can use randomness, defining an appropriate notion of space bounded computation involves restricting algorithms that use $s(n)$ bits of space to at most $2^{O(s(n)+\log n)}$ time\footnote{See, for example, the section on Randomized Space-Bounded Computation in \cite{arora2009computational} or the discussion about $BP_HSPACE$ in \cite{saks1996randomization}. Note that all the algorithms we discuss in this work are deterministic, so we will not need to explicitly analyse or discuss their running time.}. So, randomized logspace computation corresponds to algorithms running in $O(\log n)$ space \textit{and} at most poly$(n)$ time on inputs of size poly$(n)$.

This means that if the $\PCC$ is true, no logarithmic space (deterministic or randomized) algorithm can solve the planted clique detection (or recovery) problems for $k=O(n^{1/2-\delta})$. If we can show logarithmic space algorithms exist above the polynomial time threshold $k=\Omega(\sqrt{n})$, we will have shown that the statistical-computational gap holds even for space complexity.

\textbf{Detection:}

For detection, essentially the same straightforward algorithms that have been designed for time efficiency can also be implemented space efficiently. For clique sizes above the information theoretic threshold $k \geq (2+\epsilon) \log n$, the same `exhaustively search over sets of $\Theta(\log n)$ vertices' idea that gives a quasi-polynomial $n^{O(\log n)}$ time algorithm also gives a $O(\log^2 n)$ space algorithm\footnote{Since the best known time complexity for this problem is $n^{O(\log n)}$, we do not expect to solve this problem in $o(\log^2 n)$ bits of space}. For large cliques above the polynomial time threshold $k=\Omega(\sqrt{n})$, the folklore `sum test' or `edge counting' algorithm (see for example Section 1.5 of \cite{lugosi2017lectures}) can be implemented in $O(\log n)$ bits of space. We elaborate more on these algorithms in Section~\ref{sec:techniques}, but for now it suffices to observe that this means a statistical-computational gap holds for planted clique detection at $k=\Theta(\sqrt{n})$ in terms of space complexity if it holds for time complexity.

\textbf{Recovery:}

But what about planted clique recovery? Before we go any further, we should clarify what we mean by a small space algorithm for planted clique recovery. The size of the output is $k \log n$ bits, which could be much larger than the space we are allowing the algorithm. However, the space bound applies only to the working-space of the algorithm, and the output is written on a \textit{write-only} area which does not count towards the space bound. This is standard in the space complexity literature, so we can write-to-output very large answers. See, for example, Section 14.1 of \cite{wigderson2019mathematics}.

Just like for detection, simple pre-existing ideas can easily be used to obtain a $O(\log^2 n)$ space algorithm for recovering planted cliques above the information theoretic threshold, thus matching the detection space complexity in this range of parameters. We provide more details in Section~\ref{sec:techniques}. Also like for detection, we do not expect a $O(\log n)$ space algorithm in this regime because of the $\PCC$ and the relation between space and time complexity.

If we can design a $O(\log n)$ space algorithm that recovers large planted cliques $k = \Omega(\sqrt{n})$, we will have shown two things:
\begin{itemize}
	\item If the conjectured statistical-computational gap at $k=\Theta(\sqrt{n})$ holds for the time complexity of the planted clique recovery problem, it also holds for space complexity.
	\item Assuming the above statistical-computational gap holds, the coarse-grained\footnote{By coarse-grained time complexity we mean that we do not distinguish between different poly$(n)$ running times. If we were looking at a more fine-grained picture, a gap does emerge between detection and recovery. \cite{mardia2020finding} showed that for $k=\omega(n^{2/3})$, planted clique detection can be solved in $o(n)$ time. However, by results of \cite{racz2019finding} we know that any recovery algorithm must require $\Omega(n)$ time. For space complexity, `coarse-grained' means we do not distinguish between any two $O(\log n)$ space algorithms. Observe that even if there is a $O(\log n)$ space algorithm that recovers cliques of size $k=\Omega(\sqrt{n})$, the fine-grained space complexity of detection and recovery could, in principle, be different. This would be the case if there exists a $o(\log n)$ space algorithm for detection but not for recovery.} computational complexity of planted clique detection and recovery are indeed the same, no matter the notion of complexity we use - time or space.
\end{itemize}

\begin{enumerate}
\item Our first hope for such a logspace recovery algorithm is to see if any pre-existing algorithms are space efficient. However, none of the polynomial time algorithms designed for recovery above $k = \Omega(\sqrt{n})$ run in small space. They all require at least poly($n$) bits of space, and in Section~\ref{sec:techniques} we discuss, for each of them, why it seems hard to implement them in $O(\log n)$ bits of space.

Of course, the `degree-counting' polynomial time recovery algorithm for large cliques of size $k=\Omega(\sqrt{n \log n})$ from \cite{kuvcera1995expected} \textit{can} easily be implemented in $O(\log n)$ space. This matches the space complexity for detection in this parameter range. For such large cliques, a simple threshold separates the degrees of non-clique vertices and clique vertices, so membership can easily be decided from a vertex's degree. A space efficient implementation exists because we can easily count the degree of a vertex (which takes $O(\log n)$ bits to store) and iterate over all vertices in logarithmic space, re-using the counter used to store the degree across vertices. However, this idea does not work for $\Omega(\sqrt{n}) = k = o(\sqrt{n \log n})$, and it is this parameter range in which most algorithmic work for the planted clique problem has been done in the past two decades. If we want to show that the statistical-computational phenomena that hold for time complexity also hold for space complexity, we will need to focus on these parameters.

\item Our next hope is to recall that the lack of a detection-recovery gap in the time complexity of the planted clique problem is not merely an algorithmic coincidence. Section 4.3.3 of \cite{alon2007testing} shows a black box way to convert a planted clique detection algorithm into a recovery algorithm. The key idea is that if a vertex $v$ is in the clique, the subgraph induced on the vertex set that \textit{does not} contain $v$ or its neighbours is distributed as an Erd\H{o}s-R\'enyi graph. But, if $v$ is not in the clique, this induced subgraph is distributed as a planted clique graph. Then we can simply run the detection algorithm to decide if $v$ is in the planted clique or not\footnote{Of course, such a reduction has a built in $O(n)$ factor time overhead for the recovery algorithm above the detection algorithm.}. If we could use the edge counting detection algorithm and implement this reduction between recovery and detection in small space, then it seems we would be done. What is more, such a reduction \textit{can} be implemented in small space!\footnote{To count the number of edges induced in such a manner by a vertex $v$, we can simply iterate over all pairs $u,w$ of vertices in the original graph. We increment the counter only if the edge $(u,w)$ exists \textit{and} neither of the vertices $u,w$ have an edge to $v$.}

However, there is a slight issue. The statistical success of the reduction in \cite{alon2007testing} requires the failure probability of the detection algorithm to be at most $o(\frac{1}{n})$. This is because we need to repeat the detection algorithm $n$ times, once for each vertex in the original graph, and thus need to take a union bound. However, as we can see from Section 1.5 in \cite{lugosi2017lectures}, the failure probability of the edge counting test is $\exp\left(\Theta\left(\frac{-k^4}{n^2}\right)\right)$. This means the failure probability is $o(\frac{1}{n})$ only for $k = \omega(\sqrt{n} \log^{\frac{1}{4}} n)$, which is not a huge improvement over the degree counting algorithm.

\end{enumerate}

Due to the discussion above, we need some new ideas to get small space recovery algorithms for planted cliques of size $k=\Omega(\sqrt{n})$. Our main result, stated informally below, is one that falls just short of our aim of a $O(\log n)$ space algorithm. For a formal statement, see Theorem~\ref{thm:main} in Section~\ref{sec:Algorithms}.

\begin{quote}
For some large enough constant $C>0$, for planted cliques of size $k \geq C\sqrt{n}$, the recovery problem $\PCR$ can be solved in $O\left(\left(\log^*{n}-\log^*{\frac{k}{\sqrt{n}}}\right) \cdot \log n\right)$ bits of space.
\end{quote}
\begin{enumerate}
\item If $k = \omega(\sqrt{n}\log^{(\ell)}n)$ for any constant integer $\ell > 0$, the space usage is indeed $O(\log n)$ bits, which was our target.
\item However, if $k = C\sqrt{n}$, the space usage is $O(\log^*{n} \cdot \log n)$ bits, which is just shy of what we were aiming for.
\end{enumerate}

Our result suggests that there does exist an $O(\log n)$ space algorithm to recover cliques of size $k =\Omega(\sqrt{n})$, since we come very close to achieving such parameters. We fail to answer our titular question, but only just. We provide strong evidence that the answer is `yes', and the statistical-computational phenomena that (conjecturally) hold for planted clique time complexity also (conjecturally) hold for space complexity. We have thus initiated the study of high dimensional statistical problems in terms of their space complexity.

As we see in Section~\ref{sec:related_work}, a long line of work on restricted models of computation has been used to show hardness of the planted clique problem. On the other hand, this work (like \cite{mardia2020finding}) studies a restricted model of computation with the primary aim of making algorithmic progress and further pushing down the complexity of successful planted clique algorithms.

\textbf{Open Problem:}
Is there a logspace algorithm that recovers planted cliques of size $k = \Omega(\sqrt{n})$ reliably, or is there a (tiny) detection-recovery gap in the space complexity of the planted clique problem?

\subsection{Related Work}\label{sec:related_work}

\textbf{Planted Clique Hardness:} It is widely believed that polynomial time algorithms can only detect or recover the planted clique for clique sizes above $k=\Omega(\sqrt{n})$. One piece of evidence for this belief is the long line of algorithmic progress using a variety of techniques that has been unable to break this barrier \cite{kuvcera1995expected,alon1998finding,feige2000finding,feige2010finding,ames2011nuclear,dekel2014finding,chen2014statistical,deshpande2015,hajek2015computational,mardia2020finding}. The other piece of evidence comes from studying restricted but powerful classes of algorithms. \cite{jerrum1992large} showed that a natural Markov chain based technique requires more than polynomial time below this threshold. Similar hardness results (for the planted clique problem or its variants) have been shown for statistical query algorithms \cite{feldman2017statistical}, circuit classes \cite{rossman2008constant,rossman2010monotone}, the Lov{\'a}sz--Schrijver hierarchy \cite{feige2003probable}, and the sum-of-squares hierarchy \cite{meka2015sum,deshpande2015improved,hopkins2018integrality,barak2019nearly}. Further evidence comes from the low-degree-likelihood method \cite{hopkins2017bayesian,hopkins2017power,hopkins2018statistical,kunisky2019notes} and through concepts from statistical physics \cite{gamarnik2019landscape}.

\textbf{Statistical-Computational Gaps:} Statistical-computational gaps are not unique to the planted clique problem, and are found in problems involving community detection / recovery \cite{decelle2011asymptotic,massoulie2014community,mossel2015consistency,abbe2016achieving}, sparse PCA \cite{pmlr-v30-Berthet13,lesieur2015phase}, tensor PCA \cite{richard2014statistical,hopkins2017power}, random CSPs \cite{achlioptas2008algorithmic,kothari2017sum}, and robust sparse estimation \cite{li2017robust,balakrishnan2017computationally}. However, the planted clique problem is special in that its hardness (or that of its close variants) can be used to show hardness and statistical-computational gaps for a variety of other problems. Such reductions can be seen in \cite{pmlr-v30-Berthet13,alon2007testing,brennan2018reducibility,brennan2020reducibility}. See \cite{brennan2020reducibility} for a more comprehensive list of examples. To the best of our knowledge, most of these reductions use randomness quite heavily, so it is unclear if such connections can also be made using only logarithmic space reductions. It would be interesting to do so since this would tie these problems together even more tightly, and would show that planted clique is a central problem in average-case complexity not just for time but also space.

\textbf{Detection-Recovery Gaps:} As we have mentioned, the statistical-computational gap in the planted clique problem appears at $k=\Theta(\sqrt{n})$ for both the detection and recovery variants. This means there is no detection-recovery gap in time complexity, and our work is trying to show that no such gap exists for space complexity either. To understand that the non-existence of this gap is not a foregone conclusion, we note that for several other problems, detection-recovery gaps \textit{do} exist. For example, for communities in the stochastic block model \cite{abbe2017community}, or planted submatrix problems \cite{hajek2015computational,chen2014statistical}. Moreover, the (non-)existence of a detection-recovery gap is not an inconsequential detail. Since the planted clique problem does not display such a gap, it is not straightforward to use it as a starting point to show detection-recovery gaps for other problems. \cite{brennan2020reducibility} overcomes this issue for semirandom community recovery by starting from a variant of the planted clique problem, and \cite{schramm2020computational} develops a low-degree-likelihood ratio technique tailored to recovery tasks to get around this problem.

\subsection{Notation and Problem Definition}\label{sec:notation}

\paragraph{Notation:}
We will use standard big $O$ notation ($O, \Theta, \Omega$). An edge between vertices $u,v$ is denoted $(u,v)$. We let $\Binnhalf$ denote a Binomial random variable with parameters $\left(n,\frac{1}{2}\right)$. Similarly, {\sf{Bern}}($p$)
denotes a Bernoulli random variable that is $1$ with probability $p$ and $0$
otherwise. Unless stated otherwise, all logarithms are taken base $2$. For a vertex $v$ in graph
$G = ([n], E)$, we will denote its degree by $\deg(v)$. Throughout this work we identify the vertex set of the graph with the set $[n] := \{1,2,...,n\}$. We will also crucially utilise the natural ordering this confers on the names of the vertices.

We also define the so-called binary iterated logarithm $\log^{*}n$. \[
\log^{*}n =
\begin{cases}
0 & \text{if $n \leq 1$} \\
1 + \log^{*}(\log n) & \text{if $n>1$}
\end{cases}
\]

Below are formal definitions of the graphs ensembles we use and the planted clique problem.\\

\begin{definition}[Erd\H{o}s-R\'enyi graph distribution: $\G(n, \frac{1}{2})$]\ \\
	\label{def:er}
	Let $G = ([n],E)$ be a graph with vertex set of size $n$. The edge set $E$ is created by including each possible edge independently with probability $\frac{1}{2}$. The distribution on graphs thus formed is denoted $\G(n, \frac{1}{2})$.\\
\end{definition}
\begin{definition}[Planted Clique graph distribution: $\G(n, \frac{1}{2},k)$]\ \\
	\label{def:pc}
	Let $G = ([n],E)$ be a graph with vertex set of size $n$. 
	Moreover, let $K \subset [n]$ be a set of size $k$ chosen uniformly at random from all ${n \choose k}$ subsets of size $k$. For all distinct pairs of vertices $u,v \in K$, we add the edge $(u,v)$ to $E$. 
	For all remaining distinct pairs of vertices $u,v$, 
	we add the edge $(u,v)$ to $E$ independently with probability $\frac{1}{2}$. The distribution on graphs thus formed is denoted $\G(n, \frac{1}{2},k)$.\\
\end{definition}

\begin{definition}[Planted Clique Detection Problem: ${\sf{PC_D}}(n,k)$]\label{def:pc-detection}\ \\
	This is the following hypothesis testing problem.
	\begin{align*}
	\label{eq:hypotheses}
	{\sf H}_0: G \sim \G(n, \frac{1}{2}) \hspace{0.1cm} \text{ and } \hspace{0.1cm}
	{\sf H}_1: G\sim
	\G(n, \frac{1}{2},k).
	\end{align*}
	Give an algorithm $\A$ that takes as input the graph $G$ and outputs either $0$ or $1$ so that \[\Pr(\A(G)=0|{\sf H}_0) + \Pr(\A(G)=1|{\sf H}_1) \geq 4/3. \]
\end{definition}

\begin{definition}[Planted Clique Recovery Problem: ${\sf{PC_R}}(n,k)$]\ \\
	\label{def:pc-recovery}
	Given an instance of $G\sim
	\G(n, \frac{1}{2},k)$, recover the planted clique $K$ with probability at least $2/3$.
\end{definition}

\subsection{Our Techniques}\label{sec:techniques}
Our space efficient recovery algorithm will depend on the ability to take a small subset of the planted clique and expand it to recover the entire clique. We first discuss such a subroutine, and then talk about our main result, the $O\left(\left(\log^*{n}-\log^*{\frac{k}{\sqrt{n}}}\right) \cdot \log n\right)$ space algorithm for planted clique recovery for large cliques of size $k=\Omega(\sqrt{n})$. We do this by first studying polynomial time algorithms that work in this regime, discussing why they take polynomial amounts of space to implement, and then providing the high level ideas of our algorithm. After this, we end with some more details on the straightforward $O(\log^2 n)$ space implementations of the known quasi-polynomial time algorithms for clique detection and recovery above the information theoretic threshold.

\textbf{Small space clique completion:}

Several polynomial time recovery algorithms use clique completion / clean-up subroutines to find the entire planted clique after finding just a large enough (possibly noisy) subset of it \cite{alon1998finding,feige2010finding,dekel2014finding,deshpande2015,mardia2020finding}. However, none of these seem amenable to space efficient implementation, so we create a simple completion algorithm of our own.

We assume we have an algorithm that implicitly maps any planted clique graph to a specific large enough subset of the vertices of the planted clique, which we call $S_C$. If given any vertex as input, this algorithm can answer whether this vertex is in $S_C$ or not using $s(n)$ bits of space. This is what we mean by `having access to' a subset of the clique that we can now complete. Consider the set $\widetilde{V}$ of those vertices which are connected to every vertex in $S_C$. It is easy to show that this new set $\widetilde{V}$ contains the entire planted clique and very few non-clique vertices (see Lemma~\ref{lem:false=pos}). As a result, the number of edges to $\widetilde{V}$ from a clique vertex is far larger than that of a non-clique vertex, and a simple logspace computable threshold can distinguish between the two cases. We show in Algorithm~\ref{alg:completion} (\textsc{Small Space Clique Completion}) and Lemma~\ref{lem:completion} that we can use this to decide if a given vertex is in the planted clique or not using $O(\log n + s(n))$ bits of space. Then we simply loop over all vertices with a further $O(\log n)$ bits of space and thus have a planted clique recovery algorithm.

\textbf{Recovery for $k=\Omega(\sqrt{n})$:}

We first take a look at existing polynomial time algorithms for $k=\Omega(\sqrt{n})$ to see why they all require poly$(n)$ bits of space, and to see if they have good ideas that we can build on to get small space algorithms.
\begin{enumerate}
	\item Spectral algorithms: The spectral algorithm of \cite{alon1998finding}, which was the first polynomial time algorithm to recover planted cliques of size $k=\Omega(\sqrt{n})$, requires access to an $n$-dimensional eigenvector. Even just storing this takes poly($n$) bits of space, and it is unclear how to space-efficiently compute only bits and pieces of this eigenvector. Perhaps the most promising avenue for a space efficient spectral algorithm would be to use the spectral detection test (based on the second eigenvalue of the adjacency matrix) with the reduction between recovery and detection from \cite{alon2007testing}. The spectral detection test has a much smaller failure probability than $o(\frac{1}{n})$, so if we can implement it in small space, this approach would actually work. However, it is not at all clear how to compute the second eigenvalue of the adjacency matrix to desired accuracy in $O(\log n)$ space. \cite{doron2015problem,doron2017approximating} study the problem of approximating eigenvalues of an undirected graph in logarithmic space and we might hope to use their algorithms to solve our problem. However, these algorithms, which are randomized and run in logarithmic space, can only approximate the normalized eigenvalues to within constant accuracy. We require inverse polynomial accuracy to use the spectral detection test.
	\item Optimization / SDP algorithms: Several optimization theoretic algorithms involving semidefinite programs have been designed that solve the planted clique recovery problem for $k=\Omega(\sqrt{n})$ \cite{feige2000finding,ames2011nuclear,chen2014statistical,hajek2015computational}. However, we do not expect to have a general-purpose logarithmic space algorithm for semidefinite programs. The works \cite{dobkin1979linear,serna1991approximating} show that even (approximately) solving linear programs, which are a special case of semidefinite programs, is logspace complete for P. This means that if we had a logspace algorithm for semidefinite programs, every problem with a polynomial time algorithm could also be solved in logarithmic space. Such a proposition is believed to be untrue \cite[Conjecture 14.8]{wigderson2019mathematics}.
	\item (Nearly) Linear time algorithms: 
	\begin{enumerate}
		\item The algorithm of \cite{feige2010finding} maintains a subset of `plausible clique vertices' and reduces the size of this subset by $1$ in every round. As a result, it needs to maintain a polynomially large subset for most of the time it runs. There also does not seem to be a clever way to compress this set, since it depends crucially on the edge structure of the graph.
		\item The message passing algorithm of \cite{deshpande2015} is iterative and produces a new dense $n \times n$ matrix at every iteration, which can not be done in logarithmic space. It is plausible that a more space efficient recursive algorithm that recomputes messages as needed exists. But, since the algorithm requires $\Theta(\log n)$ sequential iterations / recursive calls, and we will need $\Omega(\log n)$ bits of space for each level of recursion, we do not expect this space usage to be $o(\log^2 n)$ bits. Since this does not improve the space usage over the simple algorithm that works above the information theoretic threshold, we do not pursue this idea further.
		\item The algorithm of \cite{dekel2014finding}, like \cite{feige2010finding}, maintains a sequence of shrinking subsets of vertices where the ratio between the number of clique and non-clique vertices improves in every round. Further, these subsets are polynomial sized \textit{and} random. Since the pruning of the set depends on randomness from the algorithm, any clever space efficient implementation that re-uses space would need to store the random coins it tosses, defeating the purpose of a space efficient implementation. However, the key idea behind this algorithm \textit{can} be de-randomized, and this is the first observation that forms the basis of our $O\left(\left(\log^*{n}-\log^*{\frac{k}{\sqrt{n}}}\right) \cdot \log n\right)$ space algorithm.
	\end{enumerate}
\end{enumerate}

We briefly explain the technique of \cite{dekel2014finding} in more detail, but using the notation of this work rather than that of \cite{dekel2014finding}. Their algorithm runs for $T$ rounds and maintains a sequence of vertex subsets $\{N_t,V_t\}_{ 1 \leq t \leq T}$. $N_1=V_1$ is essentially the entire vertex set $[n]$, and then each vertex of $V_{t-1}$ is included in $N_t$ iid with some probability and each vertex of $N_{t}$ is added to $V_t$ by cleverly using information from the edge structure of the input graph. This results in the ratio of clique vertices to non-clique vertices in $V_t$ increasing by a constant factor in every round. $T$ is then chosen large enough so that $V_T$ is entirely a subset of the planted clique. The entire clique is now output using a clique completion subroutine. 

Since the subsets $N_t$ described above depend so heavily on the randomness of the algorithm as well as the edge structure of the input graph, this algorithm can not be implemented in less that poly($n$) space. On the other hand, we have already noted that creating a space efficient clique completion algorithm can be done, and we have done so in Lemma~\ref{lem:completion} with Algorithm~\ref{alg:completion}. So we now focus on trying to modify the first part of the algorithm to something that can be implemented space efficiently. Our challenge is to concisely represent the sets $N_t$ (and by extension, $V_t$).

Our observation is that the clever filtering of \cite{dekel2014finding} does not depend crucially on the set $N_t$ being a subset of $V_{t-1}$ (which is what makes it depend on the edge structure of the graph). Nor does it depend on the set being random. The only thing we really need is that the proportion of clique to non-clique vertices in $N_t$ is not too small, and that we can easily iterate over all the vertices in any set $N_t$. This gives us the freedom we require to design the sets $N_t$ to be concisely representable, and we use our computer's representation of the vertex set to our advantage. For our computer, the names of the $n$ vertices of the graph are $\log n$ bit integers, and we can use the fact that integers have a natural ordering as well as the fact that simple arithmetic can easily be done in $O(\log n)$ bits of space.

We first set up some notation. The quantities we define will be functions of $n,k$, and the graph $G \sim \G(n, \frac{1}{2},k)$ although our notation will not explicitly denote this. The value of $n,k$ and the graph will always be clear from context. Recall that $[n]$ is the vertex set of a graph $G \sim \G(n, \frac{1}{2},k)$ with $n$ vertices and a planted clique called $K$ of size $k$ and a set of edges called $E$.

\begin{itemize}
	\item Define $n_0$ as the smallest integer that is a power of $2$ and is at least $n/2$. This means $n/2 \leq n_0 < n$. Define $k_0 := k \frac{n_0}{n}$
	\item For any integer $0 < t < \log {n_0}$, let $n_t := \frac{n_0}{2^t}$, $k_t := \frac{k_0}{2^t}$. Note that $n_t$ is always an integer.
	\item We can now define the subsets $N_t$ of the vertex set $[n]$ that will be of particular interest in our filtering algorithm. Let $N_{t} := [n_{t-1}] \setminus [n_t]$, and note that the $N_t$'s are all disjoint sets. Clearly, $|N_t| = n_t$. See Figure~\ref{fig:ntexample} for an example.
\end{itemize}

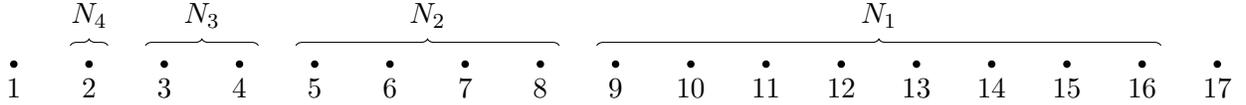
\begin{figure}
\begin{centering}

\begin{tikzpicture}
\tikzset{every node}=[draw,shape=circle]

\draw[xshift=-1cm] (1,0) 
node[circle,fill,inner sep=1pt,label=below:$1$](b){};
\draw[xshift=-1cm] (2,0) 
node[circle,fill,inner sep=1pt,label=below:$2$](b){};
\draw[xshift=-1cm] (3,0) 
node[circle,fill,inner sep=1pt,label=below:$3$](b){};
\draw[xshift=-1cm] (4,0) 
node[circle,fill,inner sep=1pt,label=below:$4$](b){};
\draw[xshift=-1cm] (5,0) 
node[circle,fill,inner sep=1pt,label=below:$5$](b){};
\draw[xshift=-1cm] (6,0) 
node[circle,fill,inner sep=1pt,label=below:$6$](b){};
\draw[xshift=-1cm] (7,0) 
node[circle,fill,inner sep=1pt,label=below:$7$](b){};
\draw[xshift=-1cm] (8,0) 
node[circle,fill,inner sep=1pt,label=below:$8$](b){};
\draw[xshift=-1cm] (9,0) 
node[circle,fill,inner sep=1pt,label=below:$9$](b){};
\draw[xshift=-1cm] (10,0) 
node[circle,fill,inner sep=1pt,label=below:$10$](b){};
\draw[xshift=-1cm] (11,0) 
node[circle,fill,inner sep=1pt,label=below:$11$](b){};
\draw[xshift=-1cm] (12,0) 
node[circle,fill,inner sep=1pt,label=below:$12$](b){};
\draw[xshift=-1cm] (13,0) 
node[circle,fill,inner sep=1pt,label=below:$13$](b){};
\draw[xshift=-1cm] (14,0) 
node[circle,fill,inner sep=1pt,label=below:$14$](b){};
\draw[xshift=-1cm] (15,0) 
node[circle,fill,inner sep=1pt,label=below:$15$](b){};
\draw[xshift=-1cm] (16,0) 
node[circle,fill,inner sep=1pt,label=below:$16$](b){};
\draw[xshift=-1cm] (17,0) 
node[circle,fill,inner sep=1pt,label=below:$17$](b){};

\draw[decoration={brace,raise=7pt},decorate](7.75,0) -- node[above=10pt] {$N_1$}(15.25,0);
\draw[decoration={brace,raise=7pt},decorate](3.75,0) -- node[above=10pt] {$N_2$}(7.25,0);
\draw[decoration={brace,raise=7pt},decorate](1.75,0) -- node[above=10pt] {$N_3$}(3.25,0);
\draw[decoration={brace,raise=7pt},decorate](0.75,0) -- node[above=10pt] {$N_4$}(1.25,0);

\end{tikzpicture}
\end{centering}

\caption{An example of our sets $N_t$ for $n=17$.}
\label{fig:ntexample}
\end{figure}

It is easy to observe that given $n,t$ we can iterate over the vertex set $N_t$ in $O(\log n)$ space, which is exactly what we wanted. In the analysis of Lemma~\ref{lem:struct-filt}, we will also show that the ratio of clique vertices to non-clique vertices in any $N_t$ is roughly the same as in the whole graph, which is not too small. Now we must implement the rest of the ideas in \cite{dekel2014finding}, the ones that actually use the input graph to find the clique.

After setting $V_1 = N_1$, the filtering step of \cite{dekel2014finding} fixes a threshold and adds a vertex in $N_t$ to the set $V_t$ if and only if that vertex has more edges to $V_{t-1}$ \footnote{Technically, \cite{dekel2014finding} counts the number of edges to $V_{t-1} \setminus N_t$, but in our construction we will have $V_{t-1} \setminus N_t = V_{t-1}$.} than the set threshold. Since clique vertices in $N_t$ are likely to have a higher number of edges to $V_{t-1}$ than non-clique vertices, the former are more likely to appear in $V_t$ and the latter are more likely to be filtered out. This is how the ratio of clique to non-clique vertices in $V_t$ gradually increases with $t$. If we had an algorithm to check membership in $V_{t-1}$ that uses $s_{t-1}(n)$ bits of space, we could design an algorithm to check for membership in $V_t$ that uses $O(\log n) + s_{t-1}(n)$ bits of space. To see this, suppose we have a vertex $v \in N_t$ and we want to decide if it is in $V_{t}$. We can simply iterate over the set $N_{t-1}$, and for each vertex $u \in N_{t-1}$, check if it is also in $V_{t-1}$ using our assumed algorithm. We can also maintain a $O(\log n)$ bit counter to count the number of edges from $v$ to all $u$ that are in $V_{t-1}$. Since we can re-use the $s_{t-1}(n)$ bits of space to check membership in $V_{t-1}$ across different $u$, the whole things can be done in $O(\log n) + s_{t-1}(n)$ bits of space. By induction, this means we can check for membership in $V_T$ using $O(T \cdot \log n)$ bits of space. We provide a formal algorithm and proof of such a claim in Lemma~\ref{lem:smallspacefilterimpl} using Algorithm~\ref{alg:filter}.

Overall, this promises to give a $O(T \cdot \log n)$ space algorithm. What can we set $T$ to be? Unfortunately, the algorithm of \cite{dekel2014finding} uses $T = \Theta(\log n)$ rounds, since it only gets a constant factor improvement in the ratio of clique to non-clique vertices in going from $V_{t-1}$ to $V_t$. This gives a $O(\log^2 n)$ space algorithm, which is not an improvement over the simple algorithm that works above the information theoretic threshold.

Our key idea, inspired by \cite{mardia2020finding}, is to implement a better filtering step that gets more than a constant factor of improvement in each round. The filtering / thresholding of \cite{dekel2014finding} does not utilise the size of the planted clique $k$ at all, other than the fact that it is $\Omega(\sqrt{n})$. On the other hand, \cite{mardia2020finding} uses knowledge of $k$ to design a single round filtering algorithm that recovers the planted clique for clique sizes $\omega(\sqrt{n \log \log n}) = k = o(\sqrt{n \log n})$ in sublinear time. By appropriately implementing this idea in our context for multiple rounds, we can utilize knowledge of the number of clique vertices in $V_{t-1}$, $|V_{t-1} \cap K|$, to make sure that in going from $V_{t-1}$ to $V_t$ the following happens. The number of clique vertices decreases by at most a constant factor, while the number of non-clique vertices decreases by at least a factor of $\exp\left(\Theta\left(\frac{|V_{t-1} \cap K|^2}{|V_{t-1}|}\right)\right)$, which is $\exp\left(\Theta\left(\frac{k^2}{n}\right)\right)$ for $t=2$. For $k = \Theta(\sqrt{n})$, this is still a constant factor, but for larger $k$, this is much better than a constant factor improvement.

To use this idea, our algorithm needs to know $|V_{t-1} \cap K|$, which it does not. However, we do have high probability lower bounds on the size $|V_{t-1} \cap K|$. We design our thresholds using these estimates, and our analysis in Lemma~\ref{lem:struct-filt} shows that this suffices to get the benefits of this better filter. Let us now define the sets $V_t$ for our algorithm, thus specifying the filtering threshold. We proceed inductively.
\begin{itemize}
	\item $V_1 := N_1$
	\item For any integer $t > 1$, $V_t$ is a subset of $N_t$ of vertices which have `large' $V_{t-1}$-degree. Quantitatively, $V_t := \{v \in N_t | \sum\limits_{u \in V_{t-1}} \mathbb{1}_{(u,v) \in E} \geq \frac{|V_{t-1}|}{2} + k_{t+2} - 2\sqrt{|V_{t-1}|}  \}$.
\end{itemize}

It is this carefully chosen threshold sequence which, unlike in \cite{dekel2014finding}, varies with $t$ and uses the value of $k$ that allows us to improve on the $O(\log^2 n)$ space bound. In Lemma~\ref{lem:struct-filt} we will show that $V_T$, as defined above, is with high probability a subset of the planted clique if $T$ is large enough. We can implement an algorithm to check membership in $V_T$ in $O(T \cdot \log n)$ bits of space as discussed above (and formalized in Lemma~\ref{lem:smallspacefilterimpl}). Moreover, we get the benefits of a very quickly accelerating improvement in the ratio of clique to non-clique vertices from $V_{t-1}$ to $V_t$. From \cite{mardia2020finding} we know that one round of such a filter improves the ratio by a factor of $\exp\left(\Theta\left(\frac{k^2}{n}\right)\right)$, and the analysis of our filtering in Lemma~\ref{lem:struct-filt} shows that after $t$ rounds of such filtering, the ratio improves by what is essentially a tower of exponentials of height $\frac{t}{2}$, i.e. $\exp\left(\exp\left(...\exp\left(\Theta\left(\frac{k}{\sqrt{n}}\right)\right)\right)\right)$. This is why we are able to take $T = O\left(\log^{*} n - \log^{*} \frac{k}{\sqrt{n}}\right)$ (Lemma~\ref{lem:struct-filt}). This gives us our main result, an algorithm that can recover planted cliques of size $k \geq C\sqrt{n}$ in $O\left(\left(\log^*{n}-\log^*{\frac{k}{\sqrt{n}}}\right) \cdot \log n\right)$ bits of space. The formal statement and proof can be found in Theorem~\ref{thm:main}.

\textbf{Detection:}

\begin{enumerate}
	\item It is well known (see \cite{bollobas1976cliques} or Lemma~\ref{lem:gnpcliquesize}) that for any positive constant $\epsilon >0$, the probability that an Erd\H{o}s-R\'enyi $\G(n, \frac{1}{2})$ graph has a clique of size at least $(2+\epsilon) \log n$ goes to $0$. Meanwhile, if $k \geq (2+\epsilon) \log n$, then by definition a planted clique graph $\G(n, \frac{1}{2},k)$ has a clique of size $(2+\epsilon) \log n$. The existence of a clique of this size is a well-known and simple detection test for $\PCD$ (see, for example, Proposition 1.3 \cite{lugosi2017lectures}). Moreover, such a test only needs to iterate over all vertex subsets of size $(2+\epsilon) \log n$, which can be done by maintaining a $\log n$ bit name/number for each of the $(2+\epsilon) \log n$ vertices and looping over all possibilities. For a given possible clique, the algorithm needs to check if all ${(2+\epsilon) \log n \choose 2}$ edges exist. This can be done by looping over all these edges with 2 more $O(\log\log n)$ bit counters. Overall, this implementation requires $O(\log^2 n)$ bits of space.
	
	\item The simple `sum test' or `edge counting' algorithm that is well-known to work for large planted clique $k=\Omega(\sqrt{n})$ detection (see for example Section 1.5 of \cite{lugosi2017lectures}) can easily be implemented in $O(\log n)$ space. The planted graph has significantly more edges than the graph without a clique, so simply counting the number of edges in the input graph and using a threshold test gives a successful detection algorithm. The algorithm only needs to maintain the edge count, which is a number between $1$ and $n^2$ (which can be done with $O(\log n)$ bits), and it can also easily iterate over all distinct vertex pairs in $O(\log n)$ bits of space. Lastly, the algorithm also needs to compute the threshold (from \cite{lugosi2017lectures}, we can use the threshold $\frac{{n \choose 2}}{2}+\frac{{k \choose 2}}{4}$), which can easily be computed from the input (which contains $n,k$) in logarithmic space. This means that for planted clique detection, assuming we have a time complexity based statistical-computational gap, we also have a space complexity based statistical-computational gap.
\end{enumerate}

\textbf{Recovery above the information theoretic threshold:}

For cliques of size $(2+\epsilon) \log n \leq k = O(\log n)$, with high probability the planted clique is the unique largest clique in $\G(n, \frac{1}{2}, k)$ \cite[Theorem 1.7]{lugosi2017lectures}. This means that an algorithm that loops over all possible vertex subsets of size $k$ can find and output the entire planted clique. To do this it only need to maintain $k$ names of vertices (which takes $O(k\log n)$ bits of space) and 2 counters of $O(\log k)$ bits of space to check if a given set of $k$ vertices form a clique. Overall, this implementation needs $O(\log^2 n)$ bits of space.

A simple application of the reduction between detection and recovery from \cite{alon2007testing} combined with the $O(\log^2 n)$ space detection algorithm for clique sizes above the information theoretic threshold $k  \geq (2+\epsilon) \log n$ also gives a $O(\log^2 n)$ space recovery algorithm for $k = \omega(\log n)$. We provide a formal statement and proof in Lemma~\ref{lem:log2n-recovery}.


\section{Algorithms}
\label{sec:Algorithms}
We now prove our main results after formalizing our model of computation in Section~\ref{sec:model}. In Section~\ref{sec:cliquecompl} we give a space efficient algorithm for clique completion. In Section~\ref{sec:algmain} we prove our $O\left(\left(\log^*{n}-\log^*{\frac{k}{\sqrt{n}}}\right) \cdot \log n\right)$ space recovery algorithm for clique sizes above the polynomial time threshold $k=\Omega(\sqrt{n})$.

\subsection{Model of Computation}
\label{sec:model}

We use a standard notion of deterministic space bounded computation. See, for example, \cite[Section 14.1]{wigderson2019mathematics}. For a $s(n)$-space algorithm, the input is a read-only version of the $n \times n$ adjacency matrix of the graph as well as the clique size $k$. Every entry in the matrix as well as the value of $k$ is stored in its own register. The algorithm has access to $s(n)$ bits of working space, and the output is write-only (and possibly much larger than $s(n)$). The last fact allows us to solve problems whose outputs may be much larger than $s(n)$, a property we will use to solve $\PCR$.

To make our model convenient for algorithm design, we also allow random access to the input registers. In our model, we assume basic arithmetic (addition, multiplication, subtraction, division) on $O(\log n)$ bit numbers can be done in $O(\log n)$ bits of space. We also assume that the algorithm can compute or knows $n$ by accessing the adjacency matrix using $O(\log n)$ bits of space.

\begin{subsection}{Space bounded clique completion}\label{sec:cliquecompl}

The main idea behind this algorithm is discussed in Section~\ref{sec:techniques}. If we have access to a large enough subset of the clique, very few vertices that are adjacent to the entire subset (i.e `common neighbours') are not in the planted clique. Counting the edges from a vertex $v$ to this set of `common neighbours' of the known clique subset allows us to decide whether or not $v$ is in the planted clique.
\vspace{.2cm}
\begin{algorithm}[]
	\SetAlgoLined
	\KwIn{Graph $G = ([n],E) \sim \G(n, \frac{1}{2},k)$, clique size $k$, oracle $O_{S_C}$ with access to a clique set $S_C \subset K$ : $O_{S_C}(v) = 1$ if $v \in S_C$, $O_{S_C}(v) = 0$ if $v \notin S_C$ }
	\KwOut{Clique $K$}
	
	\For{$v \in [n]$}{
			Initialize $\widetilde{\text{deg}}(v) = 0$
			
			\For{$u \in [n]$}{
				Initialize 	$in\widetilde{V}(u) =$ TRUE
				
				\For{$w \in [n]$}{\If{$O_{S_C}(w)=1$ and $(w,u) \notin E$}{
						
						\hspace{15em}\smash{$\left.\rule{0pt}{3.5\baselineskip}\right\}\ \mbox{Decide if $u$ is a common neighbour}$}
						
						Set $in\widetilde{V}(u) =$ FALSE}}
				
				\If{$in\widetilde{V}(u) = \text{TRUE}$ and $(u,v) \in E$}{$\widetilde{\text{deg}}(v) = \widetilde{\text{deg}}(v) + 1$}
				
			}
			
			\hspace{19.5em}\smash{$\left.\rule{0pt}{3.5\baselineskip}\right\}\ \mbox{Use `common neighbour'-degree of $v$}$}
			
			\If{$\widetilde{\text{deg}}(v) \geq \frac{2k}{3}+3 \log k$}{\textbf{write-to-output} $v$}
	}

	\caption{\textsc{Small Space Clique Completion (SSCC)}}
	\label{alg:completion}
\end{algorithm}
\vspace{.2cm}

\begin{lemma}[Deterministic + small space clique completion]\ \\
	\label{lem:completion}
	Let $k = \omega(\log n)$, and $G\sim
	\G(n, \frac{1}{2},k) = ([n],E)$. Let $O_{S_C}$ be a deterministic algorithm that uses $s(n)$ bits of space and, except with probability at most $p(n) \leq \frac{1}{2}$ (over the randomness in $G$), has the following properties.
	\begin{enumerate}
		\item When given as input the graph $G$ and clique size $k$, it implicitly defines a subset of the planted clique vertices $S_C$ such that $S_C \subset K$ and $|S_C| \geq 2 \log n$.
		\item It does this by returning, for $v \in [n]$, $O_{S_C}(v)=1$ if and only if $v \in S_C$, and $0$ otherwise.
	\end{enumerate}

	Then for large enough $n$, \textsc{Small Space Clique Completion} (Algorithm~\ref{alg:completion}), when run on $G$ with access to the algorithm $O_{S_C}$, runs deterministically in space $O(s(n) + \log n)$ and writes to output the correct planted clique $K$ except with probability at most $p(n)+\left(\frac{1}{n}\right)^{\log k} + n\exp\left(\frac{-k}{54}\right)$ (which is over the randomness in $G$).
\end{lemma}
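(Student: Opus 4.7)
The proof splits naturally into a space analysis and a correctness analysis.

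For the space bound, I would observe that Algorithm~\ref{alg:completion} uses three nested loops indexed by $v,u,w \in [n]$ (each requiring an $O(\log n)$-bit loop counter), together with a single-bit flag $in\widetilde{V}(u)$ and the $O(\log n)$-bit counter $\widetilde{\deg}(v)$ (which stores an integer in $[0,n]$). The oracle calls $O_{S_C}(w)$ are performed sequentially, so the $s(n)$ bits of working memory they consume can be reused across calls. Summing everything yields $O(s(n) + \log n)$ bits of working space.

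For correctness, condition on the event of probability at least $1 - p(n)$ that the oracle is successful, so that $S_C \subset K$ with $|S_C| \geq 2\log n$. Define $\widetilde{V} := \{u \in [n] : (w,u) \in E \text{ for every } w \in S_C\}$; this is exactly the set whose membership is captured by the flag $in\widetilde{V}(u)$ in the algorithm's inner loop. Two additional events need to be controlled. First, no non-clique vertex slips into $\widetilde{V}$: since for each $u \notin K$ the edges $(w,u)$ with $w \in S_C$ are independent Bernoulli($1/2$), we have $\Pr[u \in \widetilde{V}] = 2^{-|S_C|}$, and union-bounding over the at most $n$ non-clique vertices contributes the $(1/n)^{\log k}$ term (this is where the hypothesis on $|S_C|$ is invoked). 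On the good event, $\widetilde{V} = K \setminus S_C$ has size $k - |S_C|$. Second, the degree count for each non-clique $v$ is well-behaved: since $\widetilde{\deg}(v)$ is a sum of $|\widetilde{V}|$ independent Bernoulli($1/2$) variables, it has mean $\leq k/2$, and a Hoeffding/Chernoff bound with the deviation $\frac{2k}{3} + 3\log k - k/2 \gtrsim k/6$ gives $\Pr\!\left[\widetilde{\deg}(v) \geq \frac{2k}{3} + 3\log k\right] \leq \exp(-k/54)$. Union-bounding over the at most $n$ non-clique vertices contributes $n\exp(-k/54)$.

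For clique vertices $v \in K$, the count $\widetilde{\deg}(v)$ is deterministically at least $|\widetilde{V}| - 1 = k - |S_C| - 1$ (because every other vertex of $K \setminus S_C$ is adjacent to $v$ by virtue of the clique), which exceeds the threshold $\frac{2k}{3} + 3\log k$ so long as $|S_C|$ is small relative to $k$ (which will be the case in the setting of Theorem~\ref{thm:main}, since $k = \omega(\log n)$ and the oracle we plug in will have $|S_C| = O(\log n)$). Summing the three failure events (oracle failure, non-clique contamination of $\widetilde{V}$, false positives from tail deviation) yields the stated bound of $p(n) + (1/n)^{\log k} + n\exp(-k/54)$. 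The main obstacle is the Chernoff computation: one must carefully verify that the slack between $\frac{2k}{3} + 3\log k$ and the expected value $|\widetilde{V}|/2$ is large enough to yield the precise exponent $-k/54$ in Hoeffding's inequality, and simultaneously small enough that clique vertices comfortably pass the threshold; the rest is routine bookkeeping.
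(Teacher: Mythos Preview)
Your space analysis is fine and matches the paper. The correctness argument, however, has a genuine gap: you treat the edges from a non-clique vertex $u$ to $S_C$ (and later, from a non-clique vertex $v$ to $\widetilde V$) as independent $\mathrm{Bernoulli}(1/2)$ variables \emph{after} conditioning on the oracle's success. But $S_C$ is the output of a deterministic algorithm run on $G$ itself, so $S_C$ is a function of the edge set. Once you condition on the event ``the oracle succeeded and produced this particular $S_C$'', the edges incident to $S_C$ are no longer fresh coin flips, and the identity $\Pr[u\in\widetilde V]=2^{-|S_C|}$ is not justified. The same objection applies to your Chernoff bound on $\widetilde{\deg}(v)$: the set $\widetilde V$ is itself determined by edges of $G$, so the edges from $v$ to $\widetilde V$ need not be conditionally independent. (A symptom that something is off: your union bound would give $n\cdot 2^{-2\log n}=1/n$, not the $(1/n)^{\log k}$ appearing in the statement.)

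The paper avoids this dependence by proving \emph{unconditional} structural facts about $G$ that hold simultaneously for every admissible choice of $S_C$. Lemma~\ref{lem:false=pos} shows, via a union bound over all $\binom{k}{2\log n}$ size-$2\log n$ subsets of $K$, that every such subset has at most $3\log k$ non-clique common neighbours; this is where the $(1/n)^{\log k}$ term actually comes from. Separately, Lemma~\ref{lem:2kby3} shows (again unconditionally) that no non-clique vertex is adjacent to $\ge 2k/3$ clique vertices, contributing the $n\exp(-k/54)$ term. On the intersection of these two events and the oracle's success, the analysis is purely deterministic: $\widetilde V$ contains all of $K$ and at most $3\log k$ extra vertices, so a clique vertex has $\widetilde{\deg}(v)\ge k-1\ge \tfrac{2k}{3}+3\log k$, while a non-clique vertex has $\widetilde{\deg}(v)<\tfrac{2k}{3}+3\log k$. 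This is precisely why the threshold has the form $\tfrac{2k}{3}+3\log k$. Note also that this route needs no upper bound on $|S_C|$; your appeal to ``$|S_C|=O(\log n)$ in the application'' is both unnecessary and, in fact, not true in Theorem~\ref{thm:main}, where $|S_C|=|V_T|\ge k/2^{T+3}$.
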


\begin{proof}
	
\textbf{Space usage:}
The algorithm needs to store vertices $v,u,w$ to run the for loops, each of which take $\log n$ bits of space since the size of the vertex set is $n$. The for loops can be run simply by incrementing the counter that stores the name/number of $v$, $u$, or $w$. The algorithm also needs to invoke the oracle $O_{S_C}$ which we know takes $s(n)$ bits of space. The only other variables the algorithm needs to store are $\widetilde{\text{deg}}(v)$ and $in\widetilde{V}(u)$, which take $\log n$ bits (because $\widetilde{\text{deg}}(v)$ ranges from $0$ to $n-1$) and 1 bit of space respectively. Note that the algorithm also needs to compute $\frac{2k}{3}+\log k$ which can be done upto the few bits of precision required to make the comparison in $O(\log n)$ bits of space. Hence the entire algorithm has a space requirement of $O(s(n) + \log n)$ bits. Note that both $\widetilde{\text{deg}}$ and $in\widetilde{V}$ are only ever required for one $u,v$ pair at a time, and so their space is re-used across the outer for loops. Similarly, space can be re-used for every call to the oracle.

\textbf{Correctness:}
By assumption, we know that except with probability at most $p(n)$ (over the randomness of the input graph) the oracle $O_{S_C}$ outputs $1$ only on a set $S_C$ with the following properties. $S_C \subset K$ and $|S_C| \geq 2 \log n$. We shall call this event $A_1$ and condition on it happening for the rest of this proof. Let the event $C$ denote the correctness of our algorithm, and note that we are trying to upper bound $\Prob(C^c) \leq \Prob(C^c,A_1)+\Prob(A_1^c) \leq \Prob(C^c,A_1)+p(n)$.

We need to argue that the algorithm writes to output every vertex in $K$ and no other vertices. Consider the vertex set $\widetilde{V}$ consisting of vertices that have edges to every vertex in the known clique set $S_C$. For every vertex $v$ in $[n]$, our algorithm computes the number of edges from $v$ to $\widetilde{V}$ (we call this $\widetilde{\text{deg}}(v)$). This is because an edge $(u,v)$ is counted towards $\widetilde{\text{deg}}(v)$ only if $in\widetilde{V}(u)=TRUE$, which happens only when $u \in \widetilde{V}$. The algorithm then writes $v$ to output if $\widetilde{\text{deg}}(v) \geq \frac{2k}{3} + 3 \log k$ and otherwise does nothing.

To complete our proof, we need to show two things. First, for every clique vertex $v$ in $K$, $\widetilde{\text{deg}}(v) \geq \frac{2k}{3} + 3 \log k$. This happens because the entire clique $K$ is contained in $\widetilde{V}$ once we have conditioned on $A_1$, and $k-1 \geq \frac{2k}{3} + 3 \log k$ for $k$ large enough.

Second, we need to show that for every non-clique vertex $v \in [n] \setminus K$, $\widetilde{\text{deg}}(v) < \frac{2k}{3} + 3 \log k$. To do this, we use some structural properties of the random input graph. Let $A_2$ be the event that the maximum number of clique vertices any non-clique vertex is connected to is less than $\frac{2k}{3}$, and let $A_3$ be the event that the structural facts guaranteed by Lemma~\ref{lem:false=pos} are true. If $A_2$ and $A_3$ happen, then it is clear that our algorithm behaves as desired. Hence, $\Prob(C^c,A_3,A_2,A_1)=0$. Thus, $\Prob(C^c,A_1) \leq \Prob(C^c,A_3,A_2,A_1) + \Prob(A_3^c,A_1) + \Prob(A_2^c,A_1) \leq \Prob(A_3^c) + \Prob(A_2^c)$. Lemma~\ref{lem:false=pos} shows $\Prob(A_3^c) \leq \left(\frac{1}{n}\right)^{\log k}$ and Lemma~\ref{lem:2kby3} shows $\Prob(A_2^c) \leq n \exp\left(\frac{-k}{54}\right)$, which completes the proof.
\end{proof}

\end{subsection}

\begin{subsection}{Finding a clique subset in small space}\label{sec:algmain}

We recall some notation defined in Section~\ref{sec:techniques}.
\begin{itemize}
\item Define $n_0$ as the smallest integer that is a power of $2$ and is at least $n/2$. This means $n/2 \leq n_0 < n$. Define $k_0 := k \frac{n_0}{n}$
\item For any integer $0 < t < \log {n_0}$, let $n_t := \frac{n_0}{2^t}$, $k_t := \frac{k_0}{2^t}$. Note that $n_t$ is always an integer.
\item We also define some subsets of the vertex set $[n]$ that will be of particular interest in our filtering algorithm. Let $N_{t} := [n_{t-1}] \setminus [n_t]$, and note that the $N_t$'s are all disjoint sets. Clearly, $|N_t| = n_t$.

\end{itemize}

So far, we have defined vertex subsets that do not depend at all on the edge structure of the graph. Now we define some subsets that do incorporate information about such edge structure (and hence will be useful in finding the planted clique). We proceed inductively.

\begin{itemize}
\item $V_1 := N_1$
\item For any integer $t > 1$, $V_t$ is a subset of $N_t$\footnote{Hence the $V_t$'s are all disjoint for different values of $t$.} of vertices which have `large' $V_{t-1}$-degree\footnote{Defined as the number of edges from a vertex $v \in V_t$ to $V_{t-1}$.}. Quantitatively, $V_t := \{v \in N_t | \sum\limits_{u \in V_{t-1}} \mathbb{1}_{(u,v) \in E} \geq \frac{|V_{t-1}|}{2} + k_{t+2} - 2\sqrt{|V_{t-1}|}  \}$.
\end{itemize}

Our main structural lemma shows that for large enough $T$, $V_T$ is a large enough subset of the planted clique.

\begin{lemma}[Filtering lemma]\ \\
\label{lem:struct-filt}
Let $C>0$ be some large enough constant. Let $G \sim \G(n, \frac{1}{2},k)$, with $C\sqrt{n} \leq k$ and $T$ be an integer such that $2\left(\log^*{n}-\log^*{(k/\sqrt{n})}\right)+3 \leq T = O(\log^{*}n)$. Then for large enough $n$, except with probability at most $O\left(\exp\left(-n^{0.48}\right)\right)$, $V_T \subset K$ and $\omega(\log n) = \frac{k}{2^{T+3}} \leq |V_T|$.
\end{lemma}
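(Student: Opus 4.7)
The plan is to proceed by induction on $t$, simultaneously tracking two quantities: a lower bound on the number of surviving clique vertices $|V_t\cap K|$ (roughly $k_{t+1}$, so that the inductive hypothesis matches the conclusion $|V_T|\geq k/2^{T+3}$), and an upper bound on the number of surviving non-clique vertices $M_t:=|V_t\setminus K|$ (which should shrink as a tower of exponentials). The key structural observation enabling clean induction is that the edges used to decide membership in $V_t$ are those from $N_t$ to $V_{t-1}\subseteq N_{t-1}$, which are disjoint from the edges used to define $V_{t-1}$ (which are edges between $N_{t-1}$ and $N_{t-2}$). Hence, conditional on $V_{t-1}$, the indicators $\mathbb{1}[v\in V_t]$ for $v\in N_t$ depend only on fresh, independent coin flips.

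For the inductive step, fix a realization of $V_{t-1}$ satisfying the inductive hypothesis. For $v\in N_t\cap K$, the sum $\sum_{u\in V_{t-1}}\mathbb{1}_{(u,v)\in E}$ decomposes as $|V_{t-1}\cap K|+\mathrm{Bin}\bigl(|V_{t-1}|-|V_{t-1}\cap K|,\tfrac{1}{2}\bigr)$, which by Hoeffding is at least $|V_{t-1}|/2+|V_{t-1}\cap K|/2-2\sqrt{|V_{t-1}|}$ except with probability $\exp(-8)$. The threshold $|V_{t-1}|/2+k_{t+2}-2\sqrt{|V_{t-1}|}$ is therefore cleared whenever $|V_{t-1}\cap K|\geq 2k_{t+2}$, which is guaranteed by the inductive lower bound (since $k_{t+2}=k_{t-1}/8$). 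A Chernoff bound over the $|N_t\cap K|$ clique vertices then preserves the lower bound on $|V_t\cap K|$. For $v\in N_t\setminus K$, the same sum is $\mathrm{Bin}\bigl(|V_{t-1}|,\tfrac{1}{2}\bigr)$, so once $k_{t+2}\gg\sqrt{|V_{t-1}|}$ the probability of clearing the threshold is at most $\exp\!\bigl(-\Omega(k_{t+2}^2/|V_{t-1}|)\bigr)$; summing over $v\in N_t$ and applying Markov/Chernoff yields the conditional bound
\[
M_t \;\lesssim\; n_t\exp\!\Bigl(-c\,k_{t+2}^2/(k_{t-1}+M_{t-1})\Bigr).
\]

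This is the recursion that drives the tower-of-exponentials shrinkage, in the spirit of the single-round filter of \cite{mardia2020finding}. As long as $M_{t-1}$ dominates $k_{t-1}$ (which holds for all but the last couple of rounds), the exponent is $\Theta(k_{t+2}^2/M_{t-1})$, and one checks that iterating the map $x\mapsto n_t\exp(-c\,k_{t+2}^2/x)$ shrinks $M_t$ by an extra level of exponential per round; once $M_{t-1}$ falls below $k_{t-1}$, the denominator is $\Theta(k_{t-1})$ and the improvement per round essentially exponentiates $|V_{t-1}|/k_t$. Tracking $\log^*M_t$ through this recursion shows it drops by roughly $2$ per round, so after $T\geq 2(\log^*n-\log^*(k/\sqrt n))+3$ iterations $M_T=0$, i.e.\ $V_T\subset K$. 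The total failure probability is obtained by summing the per-round Chernoff/Hoeffding tails over $O(\log^*n)$ rounds; setting the slack exponents to $n^{0.48}$ keeps the union bound at $O(\exp(-n^{0.48}))$ (and is comfortably compatible with $k\geq C\sqrt n$).

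The main obstacle is setting up the double-sided invariant so that throughout all rounds we simultaneously have (i) $|V_{t-1}\cap K|\geq 2k_{t+2}$, so that clique vertices survive with near-certainty, and (ii) $k_{t+2}\gg\sqrt{|V_{t-1}|}$, so that the threshold is stringent enough to yield super-exponential non-clique shrinkage. The hardest regime is the first couple of rounds, where $M_{t-1}=\Theta(n_{t-1})$ and the exponent $k_{t+2}^2/|V_{t-1}|$ is only $\Theta((k/\sqrt n)^2)$; this is exactly the $\exp(\Theta(k^2/n))$ seed improvement from \cite{mardia2020finding}, which must be large enough (requiring the constant $C$ in $k\geq C\sqrt n$) to ignite the tower. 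Once this bootstrap is handled, the subsequent acceleration and the eventual hand-off to the clique-completion routine (Algorithm~\ref{alg:completion}), via $|V_T|\geq k/2^{T+3}=\omega(\log n)$, follow from routine bookkeeping.
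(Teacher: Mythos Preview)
Your proposal is correct and follows essentially the same approach as the paper: the paper also tracks $|V_t\cap K|$ and $|V_t\setminus K|$ separately via the independence of edges across the disjoint layers $N_t$, introduces an explicit sequence $m_t$ with $m_t = 2^{m_{t-1}/2^{t-1}}$ to control the non-clique shrinkage, and proves $\log^* m_{t+1} \geq 1 + \log^* m_{t-1}$ (i.e.\ one level of the tower every \emph{two} rounds, not ``2 per round'' as you wrote) to derive the bound on $T$. The only pieces you glossed over---the hypergeometric concentration for $|N_t\cap K|$ and the verification that $|V_{t-1}| \leq k_{t+4}^2$ (so that $k_{t+2}-2\sqrt{|V_{t-1}|}\geq k_{t+3}$ and the threshold actually bites)---are indeed routine once $C$ is large enough.
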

\begin{proof}
\textbf{Step 1:}

First, we show that with high probability (over the choice of planted vertices) the number of planted vertices in each subset $N_t$ is very close to what we would expect. Fix some $1 \leq t \leq T$. By linearity of expectation, $\mathbb{E}\left[|N_t \cap K|\right] = (k/n) \times n_t = k_t = \frac{k_0}{2^t}$. Since $|N_t \cap K|$ is a hypergeometric random variable, we can use concentration inequalities for hypergeometric random variables \cite[Theorem 1]{hush2005concentration} to conclude that \[ \frac{k_0}{2^{t+1}} = \frac{k_0}{2^t} -  \frac{k_0}{2^{t+1}} \leq |N_t \cap K| \leq \frac{k_0}{2^t} +  \frac{k_0}{2^{t+1}} = \frac{3k_0}{2^{t+1}}\] except with probability at most $2\exp\left(-\frac{k}{2^{2t+10}}\right)$. Union bounding over all values of $t$ from $1$ to $T$, we see that this concentration fact is simultaneously true (which we call event $A_0$) for all such $t$ except with probability at most $\Prob(A_0^c) = O\left(\exp\left(-n^{0.49}\right)\right)$. Here we have used $T = O(\log^{*}n)$ as well as $k = \Omega(\sqrt{n})$.

\textbf{Step 2:}

We now show that (conditioned on $A_0$) with high probability, at least half the clique vertices in $N_t$ are also present in the filtered set $V_t \subset N_t$. Let $A_t$ denote the event that $V_t$ has at least $\frac{k_0}{2^{t+2}}$ clique vertices. That is, \[|V_t \cap K| \geq \frac{k_0}{2^{t+2}}.\] For the base case $\Prob(A_1^c|A_0)$ is trivially $0$ since $V_1 = N_1$.

Consider $\Prob(A_t^c|A_{t-1},A_0)$. Since $A_0,A_{t-1}$, we know that there are at least $\frac{k_0}{2^{t+1}}$ clique vertices in $N_t$ as well as $V_{t-1}$. For a given clique vertex in $N_t$, what is the probability that it is also in $V_t$? If $v \in N_t \cap K$, then $\sum\limits_{u \in V_{t-1}\setminus K} \mathbb{1}_{(u,v) \in E}$ is a ${\sf Bin}\left(|V_{t-1}|-\widetilde{k}_{t-1},\frac{1}{2}\right)$ random variable where $\widetilde{k}_{t-1} = |V_{t-1} \cap K| \geq \frac{k_0}{2^{t+1}}$. Using the Chernoff Bound (Lemma~\ref{lem:chernoff})\footnote{We can assume $|V_{t-1}|-\widetilde{k}_{t-1} > 4\sqrt{|V_{t-1}|}$ because if not, then clearly we have $p_t=1$.},
\begin{align*}
1-p_t := \Prob\left(v \notin V_t | A_{t-1},A_0\right) & = \Prob\left(\sum\limits_{u \in V_{t-1}} \mathbb{1}_{(u,v) \in E} \leq \frac{|V_{t-1}|}{2} + k_{t+2} - 2\sqrt{|V_{t-1}|} \right) \\ & = \Prob\left(\sum\limits_{u \in V_{t-1}\setminus K} \mathbb{1}_{(u,v) \in E} - \left(\frac{|V_{t-1}|-\widetilde{k}_{t-1}}{2}\right) \leq -\left(2\sqrt{|V_{t-1}|}- k_{t+2} + \frac{\widetilde{k}_{t-1}}{2} \right)\right) \\&
\leq \Prob\left(\sum\limits_{u \in V_{t-1}\setminus K} \mathbb{1}_{(u,v) \in E} - \left(\frac{|V_{t-1}|-\widetilde{k}_{t-1}}{2}\right) \leq -2\sqrt{|V_{t-1}|}\right) \\&
\leq \exp\left(\frac{-8|V_{t-1}|}{3(|V_{t-1}|-\widetilde{k}_t)}\right) \leq \exp\left(-8/3\right) \leq 0.25.
\end{align*}

Since each clique vertex in $N_t$ is added to $V_t$ independently, the total number of clique vertices in $V_t$, $\widetilde{k}_t$ is the sum of at least $\frac{k_0}{2^{t+1}}$ iid ${\sf{Bern}}(p_t)$ random variables. Using the Chernoff Bound (Lemma~\ref{lem:chernoff}), this means $|V_t \cap K| = \widetilde{k}_t \geq \frac{k_0}{2^{t+2}}$ except with probability at most $\exp\left(-c\frac{k}{2^t}\right)$ for some constant c. Hence, $\Prob(A_t^c|A_{t-1},A_0) \leq \exp\left(-c\frac{k}{2^t}\right) = O\left(\exp\left(-n^{0.49}\right)\right)$. Again, we have used $T = O(\log^{*}n)$.

We are now in a position to understand the probability that all the events $A_t$ for $0 \leq t \leq T$ happen simultaneously (which we call $A$). $\Prob(A^c) \leq \sum_{t=0}^{T} \Prob(A_t^c|A_{t-1},A_{t-2},...,A_0)$. But conditioned on the events $A_{t-1},A_0$, the event $A_t$ is indpendent of $A_1,...,A_{t-2}$. This gives $\Prob(A^c) \leq \sum_{t=0}^{T} \Prob(A_t^c|A_{t-1},A_0) = O\left(T\exp\left(-n^{0.49}\right)\right)$.

\textbf{Step 3:}

If $A$ happens, then $|V_T \cap K| \geq \frac{k_0}{2^{T+2}} \geq \frac{k}{2^{T+3}}$, which means we only need to additionally show that $V_T \subset K$ to complete the proof. To this end, we will show that the number of non-clique vertices in $V_t$ are small for all $1 \leq t \leq T$ simultaneously with high probability. Before doing so, we must set up some further notation. Define \[\frac{m_0k_2}{n_0} := m_1 := \frac{k}{\sqrt{n}} \text{ and for } t \geq 2, m_{t} := 2^{\left(\frac{m_{t-1}}{2^{t-1}}\right)}.\]

Now define $B_t$ for $t \geq 1$ as the complicated looking event that
\[|V_{t} \setminus K| \leq 
\begin{cases}
\max\left\{\frac{m_1n_{t}}{m_{t}},k_{t+2}\right\} & \text{if $\frac{m_1n_{t-1}}{m_{t-1}} \geq k_{t+1}$} \\
0 & \text{if $\frac{m_1n_{t-1}}{m_{t-1}} < k_{t+1}$}
\end{cases}
.\]
Observe that $\Prob(B_1) = 1$ because $m_1 = \frac{m_0k_2}{n_0}$ and $|V_1 \setminus K| \leq |N_1| = n_1$. We will now show that $\Prob(B_t^c|B_{t-1},A)$ is small even for $t \geq 2$. After conditioning on $B_{t-1},A$, what is the probability that a given non-clique vertex in $N_t$ is added to $V_t$? Let $v \in N_t \setminus K$, and consider $\Prob\left(v \in V_t | B_{t-1},A\right)$. Since we have conditioned on $B_{t-1}$, $|V_{t-1} \setminus K|$ is suitably small, as defined above. We can use this to upper bound $|V_{t-1}|$.

In particular, we make sure that with $k \geq C\sqrt{n}$, $C>0$ is large enough so that for large enough $n$, $k_{t+3} \leq k_{t+2} - 2\sqrt{|V_{t-1}|}$. This is equivalent to $|V_{t-1}| \leq k_{t+4}^2$ for all $t \geq 2$. Let us show that this is indeed true. Because of $A$, $|V_{t-1} \cap K| \leq 3k_t$. If $k_{t+1} > \frac{m_1n_{t-1}}{m_{t-1}}$, then $|V_{t-1}| \leq 3k_t+k_{t+1} \leq k_{t+4}^2$ for large enough $n$. If, on the other hand, $k_{t+1} \leq \frac{m_1n_{t-1}}{m_{t-1}}$, we use the fact that $m_{t-1} \geq 4^{t-2} m_1$ which we prove later in \textbf{Step 4}. $|V_{t-1}| \leq 3k_t+\frac{n_{t-1}}{4^{t-2}} \leq k_{t+4}^2$ because we have chosen $C$ large enough.

Armed with this inequality $k_{t+3} \leq k_{t+2} - 2\sqrt{|V_{t-1}|}$ and a Chernoff Bound (Lemma~\ref{lem:chernoff}), we have

\begin{align*}
q_t := \Prob\left(v \in V_t| B_{t-1},A\right) &= \Prob\left(\sum\limits_{u \in V_{t-1}} \mathbb{1}_{(u,v) \in E} \geq \frac{|V_{t-1}|}{2} + k_{t+2} - 2\sqrt{|V_{t-1}|} \right) \\
& \leq \Prob\left(\sum\limits_{u \in V_{t-1}} \mathbb{1}_{(u,v) \in E} \geq \frac{|V_{t-1}|}{2} + k_{t+3} \right) \\
& \leq \exp\left(-\frac{2k_{t+3}^2}{3|V_{t-1}|}\right).
\end{align*}

\begin{itemize}
\item \textbf{Case 1:} First we tackle the easy case. Suppose $\frac{m_1n_{t-1}}{m_{t-1}} < k_{t+1}$.

Since we have conditioned on $A,B_{t-1}$, $|V_{t-1} \cap K| \geq k_{t+1}$, which means $|V_{t-1} \setminus K| < |V_{t-1} \cap K|$. Thus $|V_{t-1}| < 2|V_{t-1} \cap K| \leq 2|N_{t-1} \cap K| \leq 3k_{t-1}$. This gives $q_t \leq \exp(-ck_t)$ for some constant $c>0$, and by a union bound over all $v \in N_t \setminus K$, we get that $|V_t \setminus K| = 0$ (which is a sufficient condition for $B_t$) except with probability at most $|N_t \setminus K|\exp(-ck_t) \leq n_t\exp(-ck_t) =  O\left(\exp\left(-n^{0.49}\right)\right)$, because $T = O(\log^*{n})$.
\item \textbf{Case 2:} Now we tackle the case $\frac{m_1n_{t-1}}{m_{t-1}} \geq \frac{k_0}{2^{t+1}}$.

Since $B_{t-1}$ and $A$ have happened, we have $|V_{t-1}\setminus K| \leq \frac{m_1n_{t-1}}{m_{t-1}}$ and $|V_{t-1} \cap K| \leq |N_{t-1} \cap K| \leq 3k_t \leq 6\frac{m_1n_{t-1}}{m_{t-1}}$, which gives $|V_{t-1}| \leq 7\frac{m_1n_{t-1}}{m_{t-1}}$. Using this with our upper bound on $q_t$, we get $q_t \leq \exp\left(-c \frac{k_{t+3}^2 m_{t-1}}{m_1n_{t-1}}\right)$ for some constant $c>0$. With $k \geq C\sqrt{n}$, let $C>0$ also be a large enough constant so that $m_1 =\frac{k}{\sqrt{n}} \geq 32$ and \[q_t \leq 2^{-\frac{m_{t-1}}{2^{t-1}}} = \frac{1}{m_t},\footnote{It is this upper bound on $q_t$ that leads to us defining $m_t$ the way we do, and thus dictates, eventually, our space complexity bound.}\] which gives $\mathbb{E}[|V_t \setminus K|] \leq \frac{n_t}{m_t}$.
Since each vertex in $N_t \setminus K$ gets added to $V_t$ independently, we can use the Chernoff Bound (Lemma~\ref{lem:chernoff}) to control $\Prob(B_t|B_{t-1},A)$. We now have the following, using the fact that $m_1$ is at least a large constant greater than $2$.

If $k_{t+2}\geq \frac{m_1n_t}{m_t}$, we have the upper bound $\Prob(B_t^c|B_{t-1},A) \leq \exp\left(-\frac{(m_1-1)k_{t+2}}{2m_1}\right) = O\left(\exp\left(-n^{0.49}\right)\right)$.

If $k_{t+2} \leq \frac{m_1n_t}{m_t}$, we have the upper bound $\Prob(B_t^c|B_{t-1},A) \leq \exp\left(-\frac{(m_1-1)n_t}{2m_t}\right) \leq \exp\left(-\frac{(m_1-1)k_{t+2}}{2m_1}\right) = O\left(\exp\left(-n^{0.49}\right)\right)$.
\end{itemize}

Our case analysis thus gives $\Prob(B_t^c|B_{t-1},A) = O\left(\exp\left(-n^{0.49}\right)\right)$. Now we show that all $B_t$'s happen simultaneously with high probability, that is, all filtered sets $V_t$ have an appropriately small number of non-clique vertices. Let $B = \cup_{t=1}^TB_t$. Then $\Prob(B^c|A) \leq \sum_{t=1}^{T} \Prob(B_t^c|B_{t-1},B_{t-2},...,B_1,A)$. But conditioned on the events $B_{t-1},A$, the event $B_t$ is indpendent of $B_1,...,B_{t-2}$. This gives $\Prob(B^c|A) \leq \sum_{t=1}^{T} \Prob(B_t^c|B_{t-1},A) = O\left(T\exp\left(-n^{0.49}\right)\right)$.

\textbf{Step 4:}

If $A$ and $B$ both happen, and $T$ is such that $\frac{m_1n_{T-1}}{m_{T-1}} < \frac{k_0}{2^{T+1}}$, then we have $|V_T \setminus K| = 0$, which means $V_T \subset K$ and $|V_T| \geq \frac{k}{2^{T+3}}$, which is exactly the desired outcome. Note that $\Prob((A,B)^c) \leq \Prob(B^c|A)+\Prob(A^c) = O\left(T\exp\left(-n^{0.49}\right)\right) = O\left(\exp\left(-n^{0.48}\right)\right)$.

So we now only need to show that $\frac{m_1n_{T-1}}{m_{T-1}} < \frac{k_0}{2^{T+1}}$ which is equivalent to $m_{T-1} > 4\sqrt{n}$.

To do this, we need $m_t$ to grow very fast with $t$, and we will show this in steps. First we show that $m_t$ grows with $t$. We then use this growth to show that it grows quite fast. We then use this fast growth to show that it grows very fast.

\begin{enumerate}
\item We prove, by induction on $t$, that for all $t \geq 2$: $m_{t} \geq 4m_{t-1}$ and $m_t \geq 4^{t+1}$.

Recall from the analysis of \textbf{Case 2} that we have assumed $C$ is large enough so that $m_1 \geq 32$. This also means $m_2 = 2^{\frac{m_1}{2}} \geq 4m_1 \geq 128$, which proves the base case for $t=2$. The first inequality holds because the function $2^{t/2}-4t$ is positive for $t=32$ as well as increasing for $t \geq 32$.

Assuming our hypothesis for $2 \leq t \leq \ell-1$, we show it holds for $t =\ell \geq 3$. \footnote{For $\ell =3$, we also use the additional fact $m_1 \geq 32 \geq 16$.} 

Because $m_{\ell-1} \geq 4^{\ell}$, we get $m_{\ell} = 2^{\frac{m_{\ell-1}}{2^{\ell-1}}} \geq 2^{2^{\ell+1}} \geq 2^{2(\ell+1)} = 4^{\ell+1}$.

Since $m_{\ell-2} \geq 4^{\ell-1}$ and $m_{\ell-1} \geq 4m_{\ell-2}$, we get $m_{\ell} = 2^{\frac{m_{\ell-1}}{2^{\ell-1}}} \geq 2^{\frac{4m_{\ell-2}}{2^{\ell-1}}} \geq 2^{2+\frac{2m_{\ell-2}}{2^{\ell-1}}} = 4m_{\ell-1}$.

Note that we have now also shown the fact $m_{t-1} \geq 4^{t-2}m_1$ that we used in \textbf{Step 3}.

\item Now that we have $m_{t} \geq 4m_{t-1}$ for $t \geq 2$, we can show that $m_t$ grows even faster. For $t \geq 2$, 
\[m_{t+1} = 2^{\frac{m_t}{2^{t}}} \geq 2^{\frac{2m_{t-1}}{2^{t-1}}}  = m_t^2.\]

\item Now that we have $m_{t} \geq m_{t-1}^2$ for $t \geq 3$ and $m_t \geq 4$ for $t \geq 1$, we can show that $m_t$ grows much faster. For $t \geq 3$,
\[m_{t+1} = 2^{\frac{m_t}{2^{t}}} \geq 2^{\frac{m_{t-1}}{2^{t-1}}\cdot \frac{m_{t-1}}{2}}  = (\sqrt{m_t})^{m_{t-1}} \geq 2^{m_{t-1}}.\]

\end{enumerate}

Since $\log^*{n}$ is a non-decreasing function, if $t\geq 3$, $\log^*{m_{t+1}} \geq \log^*{(2^{m_{t-1}})} = 1 + \log^*{m_{t-1}}$. Unrolling this gives \[\log^*{m_{t-1}} \geq \frac{t-3}{2} + \log^*{m_2}\] as long as $t$ is odd and $t \geq 3$.

Suppose $\hat{t} := 2\left(\log^*{n}-\log^*{(k/\sqrt{n})}\right)+3$. Because $T \geq \hat{t}$, $m_{T-1} \geq m_{\hat{t}-1}$. Combining this with $m_2 \geq m_1 = k/\sqrt{n}$, the fact that $\log^{*}$ is non-decreasing, and plugging into the inequality above, we get \[\log^*{m_{T-1}} \geq \log^*{n} \implies m_{T-1} \geq n > 4\sqrt{n},\] which completes the proof.

\end{proof}

\vspace{.2cm}
\begin{algorithm}[]
	\SetAlgoLined
	\KwIn{Graph $G = ([n],E) \sim \G(n, \frac{1}{2},k)$, clique size $k$, $t$, vertex $v \in N_t$, access to \textsc{$V_{t-1}$-Membership}}
	\KwOut{Membership in $V_t$ : $\mathbb{1}_{v \in V_t}$}
	Initialize $\text{size}V_{t-1} = 0, \text{deg}V_{t-1}=0$
	
	\For{$u \in N_{t-1}$}{\If{$V_{t-1}\textsc{-membership}(G,k,t-1,u) = 1$ }{
			\hspace{19em}\smash{$\left.\rule{0pt}{2.5\baselineskip}\right\}\ \mbox{Compute $|V_{t-1}|$, `$V_{t-1}$'-degree of $v$}$}
			
			$\text{size}V_{t-1} = \text{size}V_{t-1} + 1$

			$\text{deg}V_{t-1} = \text{deg}V_{t-1} + \mathbb{1}_{(u,v) \in E}$ }
			
		}

		
		\textbf{output} $\mathbb{1}_{\left\{\text{deg}V_{t-1} \geq \frac{\text{size}V_{t-1}}{2} + k_{t+2} -2\sqrt{\text{size}V_{t-1}}\right\}}$

	\caption{\textsc{$V_t$-Membership} ($t \geq 2$)}
	\label{alg:filter}
\end{algorithm}
\vspace{.2cm}

The \textsc{$V_t$-Membership} algorithm simply computes the number of edges from a vertex $v$ to the set $V_{t-1}$ and uses this to determine whether or not $v$ is in $V_t$.
\begin{lemma}[Small space filter implementation]\ \\
\label{lem:smallspacefilterimpl}
Let $G = ([n],E) \sim \G(n, \frac{1}{2},k)$ with a clique size $k$. Let \textsc{$V_1$-Membership} be an algorithm that returns $1$ for every vertex in $N_1$, and let \textsc{$V_t$-Membership} be defined as in Algorithm~\ref{alg:filter} for $t \geq 2$. Given a vertex $v \in N_t$, \textsc{$V_t$-Membership}($G,k,t,v$) returns $1$ if and only if $v \in V_t$. Otherwise it returns $0$. Moreover, it runs in space $O(t \cdot \log n)$.
\end{lemma}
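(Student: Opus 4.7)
The plan is to prove correctness by induction on $t$ and to handle the space bound by analyzing how much memory each recursive level consumes plus the local overhead.

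For correctness, the base case $t=1$ is immediate: by definition $V_1 = N_1$, and the stipulated \textsc{$V_1$-Membership} routine returns $1$ on every vertex of $N_1$. For the inductive step, assume \textsc{$V_{t-1}$-Membership}$(G,k,t-1,u)$ returns $1$ if and only if $u \in V_{t-1}$. The for-loop in Algorithm~\ref{alg:filter} iterates over every $u \in N_{t-1}$, and since $V_{t-1} \subset N_{t-1}$, the induction hypothesis guarantees that precisely the $u \in V_{t-1}$ trigger the conditional update. Therefore at termination
\[
\text{size}V_{t-1} = |V_{t-1}|, \qquad \text{deg}V_{t-1} = \sum_{u \in V_{t-1}} \mathbb{1}_{(u,v) \in E}.
\]
The returned indicator is then $1$ exactly when $\sum_{u \in V_{t-1}} \mathbb{1}_{(u,v) \in E} \geq \frac{|V_{t-1}|}{2} + k_{t+2} - 2\sqrt{|V_{t-1}|}$, which by definition is the condition $v \in V_t$.

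For the space bound, I would let $s_t(n)$ denote the space used by \textsc{$V_t$-Membership}. The base routine uses $s_1(n) = O(\log n)$ since it only needs to inspect the name of $v$. For $t \geq 2$, the local variables consist of the loop counter $u \in N_{t-1}$, the two counters $\text{size}V_{t-1}$ and $\text{deg}V_{t-1}$ (each bounded by $n$), and workspace to evaluate $k_{t+2} - 2\sqrt{\text{size}V_{t-1}}$ to the precision needed for the comparison. Each of these requires $O(\log n)$ bits, and they can be stored simultaneously. The only other space cost is the recursive call to \textsc{$V_{t-1}$-Membership}, which uses $s_{t-1}(n)$ bits. Crucially, this workspace can be reused across different values of $u$ since we only need the single bit it outputs before moving on. Hence
\[
s_t(n) = s_{t-1}(n) + O(\log n),
\]
which unrolls to $s_t(n) = O(t \cdot \log n)$.

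No step is especially delicate; the only subtlety worth pointing out explicitly is that space reuse across iterations of the outer for-loop is what prevents the space cost from blowing up to $O(n \log n)$, and that the threshold $\frac{|V_{t-1}|}{2} + k_{t+2} - 2\sqrt{|V_{t-1}|}$ can be compared against the integer $\text{deg}V_{t-1}$ using only $O(\log n)$ bits of working arithmetic, which is consistent with the model of computation set out in Section~\ref{sec:model}.
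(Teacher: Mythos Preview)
Your proposal is correct and follows essentially the same inductive argument as the paper: induction on $t$ for correctness (unwinding the definition of $V_t$ via the loop over $N_{t-1}$), and a recurrence $s_t(n) = s_{t-1}(n) + O(\log n)$ for space, with the key observation that the workspace for the recursive call is reused across iterations.
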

\begin{proof}
We prove this via induction on $t$. For the base case $t=1$, $V_1=N_1$ so the algorithm behaves as advertised. It's space usage is clearly $O(\log n)$ since it outputs a constant.

For the inductive step, we assume the statement of the lemma is true for $t = \ell-1$, and prove it for $t = \ell$. The correctness of \textsc{$V_{\ell}$-Membership} follows immediately from the correctness of \textsc{$V_{\ell-1}$-Membership} and the definition of the vertex set $V_{\ell}$.

Let us now analyse the space usage. To iterate over $N_{t-1}$, the algorithm needs to maintain $u$, and can iterate simply by increasing the name of $u$ by $1$. Additionally, the algorithm also needs to be able to compute $n_{\ell-2},n_{\ell-1}$ to decide the start and end points of the loop. It can do all of this in $O(\log n)$ space because it has access to $n,\ell$ from the input. The algorithm requires a further $O(\log n)$ bits to maintain $0 \leq \text{size}V_{t-1}, \text{deg}V_{t-1} \leq n-1$. Lastly, it needs to run \textsc{$V_{\ell-1}$-Membership}. It can compute $\ell-1$ because it knows $\ell$, and then by our inductive assumption it can run \textsc{$V_{\ell-1}$-Membership} using another $O((\ell-1) \cdot \log n)$ bits of space. Note that this space can be re-used for every call to \textsc{$V_{\ell-1}$-Membership}. Finally, to implement the thresholding, the algorithm also needs access to $k_{\ell+2}$, which it can easily compute in $O(\log n)$ space from the inputs $n,k,\ell$. Square roots can also be computed in logarithmic space upto the desired few bits of precision required to make the comparison. The total space usage is thus $O(\ell \cdot \log n)$ bits, which completes the proof.
\end{proof}

\begin{theorem}\label{thm:main}
Let $G = ([n],E) \sim \G(n, \frac{1}{2},k)$ with a planted clique of size $k \geq C\sqrt{n}$ with the constant $C>0$ chosen as in Lemma~\ref{lem:struct-filt}. Suppose $T : = 2\left(\log^*{n}-\log^*{(k/\sqrt{n})}\right)+3$. Then for large enough $n$, there exists a deterministic algorithm that takes as input the adjacency matrix of the graph and the size of the planted clique, exactly outputs the clique $K$ with probability at least $1-O\left(\left(\frac{1}{n}\right)^{\log k} \right)$ over the randomness in the graph $G$, and runs using $O(T \cdot \log n)$ bits of space.
\begin{enumerate}
\item If $k = C\sqrt{n}$, the space usage is $O(\log^*{n} \cdot \log n)$ bits.
\item If $k = \omega(\sqrt{n}\log^{(\ell)}n)$ for some constant integer $\ell > 0$, the space usage is $O(\log n)$ bits.
\end{enumerate}
\end{theorem}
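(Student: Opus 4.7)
The plan is to compose the three machinery pieces already built: the filtering construction and its guarantees (Lemma \ref{lem:struct-filt}), the recursive small-space membership oracle for $V_T$ (Lemma \ref{lem:smallspacefilterimpl}), and the small-space clique completion routine (Lemma \ref{lem:completion}). Concretely, the recovery algorithm first sets $S_C := V_T$ for $T = 2(\log^* n - \log^*(k/\sqrt{n})) + 3$ and then feeds an oracle for $S_C$ into \textsc{Small Space Clique Completion}.

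The oracle $O_{S_C}$ is implemented as follows: on input vertex $v$, compute $n_{T-1}$ and $n_T$ (straightforward in $O(\log n)$ space from $n$ and $T$), return $0$ if $v \notin N_T = [n_{T-1}] \setminus [n_T]$, and otherwise invoke \textsc{$V_T$-Membership}$(G, k, T, v)$. By Lemma \ref{lem:smallspacefilterimpl}, this oracle uses $O(T \cdot \log n)$ bits of space. By Lemma \ref{lem:struct-filt}, except on a failure event $E_1$ of probability $p(n) = O(\exp(-n^{0.48}))$, the implicitly defined set $S_C = V_T$ satisfies $S_C \subset K$ and $|S_C| \geq k/2^{T+3} = \omega(\log n)$, which exceeds $2 \log n$ for large $n$. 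This is exactly the hypothesis required by Lemma \ref{lem:completion}.

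Feeding this oracle into Algorithm \ref{alg:completion} and applying Lemma \ref{lem:completion} then gives a deterministic algorithm that writes out $K$ using $O(s(n) + \log n) = O(T \cdot \log n)$ bits of space, and that errs with probability at most
\[
p(n) + (1/n)^{\log k} + n \exp(-k/54) \;\leq\; O\!\left(\exp(-n^{0.48})\right) + (1/n)^{\log k} + n \exp(-k/54).
\]
Because $k = \Omega(\sqrt{n})$, the first and third terms are $n^{-\omega(\log k)}$, so the whole sum collapses to $O\bigl((1/n)^{\log k}\bigr)$, matching the claim.

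Finally, the two parameter-regime corollaries fall out by reading off $T$. For $k = C\sqrt{n}$ we have $\log^*(k/\sqrt{n}) = \log^* C = O(1)$, so $T = O(\log^* n)$ and the space usage is $O(\log^* n \cdot \log n)$. For $k = \omega(\sqrt{n}\log^{(\ell)} n)$, the quantity $k/\sqrt{n}$ exceeds $\log^{(\ell)} n$ eventually, so $\log^*(k/\sqrt{n}) \geq \log^* n - \ell$ and hence $T = O(1)$, yielding $O(\log n)$ space. There is no substantive obstacle at this stage: all the hard work sits in Lemma \ref{lem:struct-filt} (the tower-of-exponentials acceleration of the filter) and in Lemma \ref{lem:smallspacefilterimpl} (the recursive depth-$T$ implementation with only $O(\log n)$ extra space per level); this theorem is essentially bookkeeping — wiring the oracle into the completion routine and collecting the union-bounded failure probabilities.
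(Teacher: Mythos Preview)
Your proposal is correct and follows essentially the same approach as the paper: build the $V_T$-membership oracle via Lemma~\ref{lem:smallspacefilterimpl}, invoke Lemma~\ref{lem:struct-filt} to certify that $V_T$ is a large clique subset, and plug the oracle into \textsc{Small Space Clique Completion} via Lemma~\ref{lem:completion}, then collect the error terms. Your explicit verification of the two parameter regimes (reading off $T$ in each case) is slightly more detailed than what the paper writes out, but the argument is the same.
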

\begin{proof}
We first note that given $n,k$ as inputs, $T$ can be computed with $O(\log n)$ bits of space. This means we can easily implement an algorithm to check membership in $V_T$. Given a vertex $v \in [n]$, in $O(\log n)$ space we can check if it is in $N_T$. If it is not, we declare it is not in $V_T$. If it is, we run \textsc{$V_T$-Membership}($G,k,T,v$). Due to Lemma~\ref{lem:smallspacefilterimpl} this gives us an $O(T \cdot \log n)$ space oracle that can answer if a vertex is in $V_T$ or not. Moreover, by Lemma~\ref{lem:struct-filt}, except with probability at most $O(\exp(n^{-0.48}))$, $V_T$ is subset of the planted clique and has more than $2 \log n$ vertices. Using this oracle with Algorithm~\ref{alg:completion} (\textsc{Small Space Clique Completion}) and invoking Lemma~\ref{lem:completion} gives us a deterministic algorithm that runs in space $O(T \cdot \log n)$ and outputs the planted clique $K$ with probability at least $1-O\left(\exp(n^{-0.48})+\left(\frac{1}{n}\right)^{\log k} + n\exp\left(\frac{-k}{54}\right)\right) \geq 1-O\left(\left(\frac{1}{n}\right)^{\log k} \right)$ over the randomness in the graph $G$.
\end{proof}

\end{subsection}


\section{Auxiliary Lemmas}
\label{sec:auxlem}
We state the Chernoff bound we use here, for the convenience of the reader.
\begin{lemma}\label{lem:chernoff}
	Let $X = \sum\limits_{i=1}^{n} X_i$ where $X_i$ are independent {\sf{Bern}}($p_i$) random variables. Let $\mu = \sum\limits_{i=1}^{n} p_i$, and $0 < \delta $. Then
	\[\Prob\left( X \geq (1+\delta) \mu \right)  \leq \exp\left(\frac{-\mu \delta^2}{2+\delta} \right)\footnote{If $\delta < 1$, this also means $\Prob\left( X \geq (1+\delta) \mu \right)  \leq \exp\left(\frac{-\mu \delta^2}{3} \right)$, a fact we will use often.}\]
	\[\Prob\left( X \leq (1-\delta) \mu \right)  \leq \exp\left(\frac{-\mu \delta^2}{3} \right)  .\]
\end{lemma}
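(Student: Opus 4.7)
The plan is to establish both tail bounds by the standard Chernoff method (exponential moment + Markov's inequality + independence), and then convert the resulting classical rate functions into the two clean forms stated in the lemma via elementary scalar inequalities.

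First, for the upper tail, I would start from Markov's inequality applied to $e^{tX}$ for $t>0$:
\[
\Prob(X \geq (1+\delta)\mu) \;\leq\; e^{-t(1+\delta)\mu}\,\E[e^{tX}] \;=\; e^{-t(1+\delta)\mu}\prod_{i=1}^n \E[e^{tX_i}],
\]
using independence. Since $X_i \sim \mathrm{Bern}(p_i)$, I have $\E[e^{tX_i}] = 1+p_i(e^t-1) \leq \exp(p_i(e^t-1))$ by the inequality $1+x\leq e^x$. Multiplying gives $\E[e^{tX}]\leq \exp(\mu(e^t-1))$. Choosing the optimal $t=\ln(1+\delta)>0$ yields the classical bound
\[
\Prob(X \geq (1+\delta)\mu) \;\leq\; \left(\frac{e^{\delta}}{(1+\delta)^{1+\delta}}\right)^{\mu}
= \exp\!\bigl(-\mu\,g(\delta)\bigr),
\qquad g(\delta):=(1+\delta)\ln(1+\delta)-\delta.
\]
To land on the stated form it then suffices to verify the scalar inequality $g(\delta)\geq \delta^2/(2+\delta)$ for all $\delta>0$. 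I would prove this by defining $h(\delta):=g(\delta)(2+\delta)-\delta^2$, checking $h(0)=0$, and showing $h'(\delta)\geq 0$; computing $h'(\delta)=(2+\delta)\ln(1+\delta)-\delta$ and then $h''(\delta)=\ln(1+\delta)+\tfrac{2+\delta}{1+\delta}-1\geq 0$ with $h'(0)=0$, which gives monotonicity.

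For the lower tail, I run the symmetric argument with $t<0$: for any $s>0$,
\[
\Prob(X \leq (1-\delta)\mu) \;\leq\; e^{s(1-\delta)\mu}\,\E[e^{-sX}] \;\leq\; e^{s(1-\delta)\mu}\exp(\mu(e^{-s}-1)).
\]
Optimizing at $s=-\ln(1-\delta)>0$ (valid for $0<\delta<1$) gives
\[
\Prob(X \leq (1-\delta)\mu) \;\leq\; \left(\frac{e^{-\delta}}{(1-\delta)^{1-\delta}}\right)^{\mu}
= \exp\!\bigl(-\mu\,\widetilde g(\delta)\bigr),
\qquad \widetilde g(\delta):=(1-\delta)\ln(1-\delta)+\delta.
\]
I then reduce to the stated bound by the elementary inequality $\widetilde g(\delta) \geq \delta^2/3$ on $(0,1)$, which I would prove by Taylor-expanding $\ln(1-\delta)=-\sum_{k\geq 1}\delta^k/k$, yielding $\widetilde g(\delta)=\sum_{k\geq 2}\delta^k/(k(k-1))$, and then bounding $\sum_{k\geq 2}\delta^{k-2}/(k(k-1)) \geq 1/3$ term-by-term (the $k=2$ term already contributes $1/2 \geq 1/3$ and the rest is nonnegative). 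For the edge case $\delta\geq 1$ the lower-tail event $\{X\leq (1-\delta)\mu\}$ is contained in $\{X\leq 0\}$ and a direct bound via $s\to\infty$ (or simply noting $(1-\delta)\mu\leq 0$) handles it if needed.

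The only potentially delicate step is verifying the two scalar inequalities $g(\delta)\geq \delta^2/(2+\delta)$ and $\widetilde g(\delta)\geq \delta^2/3$; these are standard but easy to botch, so I would present them carefully via the derivative / series arguments above. Everything else (Markov, independence, $1+x\leq e^x$, optimizing $t$) is routine. The footnote in the lemma about $\delta<1$ implying $\Prob(X\geq (1+\delta)\mu)\leq \exp(-\mu\delta^2/3)$ is an immediate corollary of the first bound since for $\delta<1$ we have $2+\delta<3$.
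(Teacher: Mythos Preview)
The paper does not actually prove this lemma: it is stated in Section~\ref{sec:auxlem} with the preface ``We state the Chernoff bound we use here, for the convenience of the reader,'' and no proof is given. So there is nothing to compare against; your proposal supplies a complete argument where the paper simply cites the result as standard.

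Your argument is the usual one and is correct in outline. One small computational slip: with $h(\delta)=g(\delta)(2+\delta)-\delta^2$ and $g'(\delta)=\ln(1+\delta)$, the product rule gives
\[
h'(\delta)=(2+\delta)\ln(1+\delta)+g(\delta)-2\delta=(3+2\delta)\ln(1+\delta)-3\delta,
\]
not $(2+\delta)\ln(1+\delta)-\delta$ as you wrote. The conclusion is unaffected: $h'(0)=0$, $h''(0)=0$, and $h'''(\delta)=(1+2\delta)/(1+\delta)^2>0$, so $h\geq 0$ on $(0,\infty)$ as needed. The lower-tail series argument is clean and correct as written.
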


We state some structural lemmas about the planted clique graph that follow from simple probabilistic arguments.

First we show that with high probability, \textit{any} clique subset of size  greater than $2\log n$ has at most $3\log k$ non-clique vertices connected to every vertex of the subset. The ideas of such an analysis are contained in the proof of~\cite[Lemma
2.9]{dekel2014finding}.\\

\begin{lemma}\label{lem:false=pos}
	Let $G  \sim \G(n, \frac{1}{2},k)$ for $k \geq 2 \log n$ and $S$ be any arbitrary subset of the planted clique $K$ with $|S| \geq 2 \log n$. Let $T$ be the set of all non-clique vertices that are connected to every vertex in $S$. Then, except with probability at most $\left(
	\frac{1}{n}\right) ^{\log k}$, $|T| \leq
	3\log k$.
\end{lemma}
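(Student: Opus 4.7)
The plan is a standard two-level union bound. The key observation is that if the conclusion $|T| \leq 3 \log k$ fails for some clique subset $S' \subseteq K$ with $|S'| \geq 2\log n$, then it fails for \emph{every} subset $S \subseteq S'$ with $|S| = 2\log n$, because restricting to a smaller subset only enlarges the set of common non-clique neighbours. So it suffices to union-bound the bad event over only those $S \subseteq K$ with $|S| = 2\log n$ (taking integer parts as needed).

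Fix such an $S$. For each non-clique vertex $v \in [n] \setminus K$, the $|S|$ edges from $v$ to vertices of $S$ are independent ${\sf Bern}(1/2)$, so the probability that $v$ is connected to every vertex of $S$ is $2^{-|S|} = n^{-2}$, and these events are independent across distinct non-clique vertices. Hence, letting $T_S$ denote the common non-clique neighbours of $S$,
\[
\Prob\bigl(|T_S| \geq 3\log k\bigr) \;\leq\; \binom{n-k}{3\log k}\bigl(n^{-2}\bigr)^{3\log k} \;\leq\; n^{3\log k}\cdot n^{-6\log k} \;=\; n^{-3\log k},
\]
where I used $\binom{n-k}{3\log k}\leq n^{3\log k}$.

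Now union-bound over the choice of $S$. The number of subsets of $K$ of size $2\log n$ is at most
\[
\binom{k}{2\log n} \;\leq\; k^{2\log n} \;=\; 2^{2\log n \cdot \log k}.
\]
Combining, the total failure probability is at most
\[
2^{2\log n \log k}\cdot n^{-3\log k} \;=\; 2^{2\log n\log k - 3\log n\log k} \;=\; 2^{-\log n \log k} \;=\; \Bigl(\tfrac{1}{n}\Bigr)^{\log k},
\]
as desired.

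I do not expect any real obstacle here; the only delicate point is the reduction to $|S|=2\log n$ (making sure the union bound is over the right family) and keeping track of the exponents so that the $k^{2\log n}$ factor from the outer union bound is exactly cancelled by one of the three $n^{-\log k}$ factors produced by the inner tail bound, leaving the stated $(1/n)^{\log k}$ slack.
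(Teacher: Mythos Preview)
Your proof is correct and follows essentially the same approach as the paper: reduce to subsets of size exactly $2\log n$ via the monotonicity observation, bound the tail of $|T_S|$ for a fixed such $S$ by a union bound over size-$3\log k$ subsets of non-clique vertices, and then union-bound over the $\binom{k}{2\log n}\le k^{2\log n}$ choices of $S$ to obtain $(1/n)^{\log k}$. The only cosmetic difference is that the paper sums over all $\ell \geq 1+3\log k$ rather than taking a single term, but the arithmetic and final bound are identical.
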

\begin{proof}
	Fix $S' \subset S \subset K$ such that $\lvert S' \rvert = 2\log{n}$. Let $T'$ denote the set of all non-clique vertices that are connected to every vertex in $S'$. Clearly, $|T| \leq |T'|$. So we will show that $|T'| \leq
	3\log k$ except with probability at most $\left(
	\frac{1}{n}\right) ^{\log k}$.
	
	Let $W$ be any subset of $K$ with $|W| = 2 \log n$. The probability there exists a subset of non-clique vertices of size
	$\ell$ connected to every element in $W$ is at most $\binom{n}{\ell}2^{-\ell
		(2 \log{n})}$.  A union bound then implies that the probability there exists a
	subset of non-clique vertices of size at least $\ell_0 =
	1+3\log k$ connected to every
	element in $W$ is at most $\sum_{\ell = \ell_0}^{n - k} \binom{n}{\ell}2^{-\ell
		(2\log{n})} \leq 2^{-3 \log k \log{n}}$.  Further union bounding over all
	subsets of $K$ of size $2\log{n}$ implies $|T'| \leq
	1+3\log k$ except with probability at most
	\[
	\binom{k}{2\log{n}}2^{-3 \log k \log{n}} \leq
	2^{2\log k \log n}2^{-3 \log k\log n} =  2^{- \log k \log n} = \left(
	\frac{1}{n}\right) ^{\log k}.
	\]
\end{proof}

We also control the number of clique vertices
any non-clique vertex is connected to.
\begin{lemma}\label{lem:2kby3}
	Let $G \sim \G(n, \frac{1}{2},k)$, and let $d$ be the maximum number of clique vertices connected to a non-clique vertex. Then $\Prob(d \geq \frac{2k}{3}) \leq n \exp\left(\frac{-k}{54}\right)$.
\end{lemma}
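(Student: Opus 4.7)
The plan is to fix a single non-clique vertex, control the number of clique neighbours it has via a standard Chernoff bound, and then union-bound over all non-clique vertices. The main observation is that for any non-clique vertex $v \in [n] \setminus K$, the edges from $v$ to each of the $k$ clique vertices are mutually independent $\BER(1/2)$ random variables by the definition of $\G(n,\frac{1}{2},k)$. Hence $d_v := |\{u \in K : (u,v) \in E\}|$ is distributed as $\mathrm{Bin}(k, 1/2)$, with mean $\mu = k/2$.

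I would then apply the multiplicative Chernoff bound from Lemma~\ref{lem:chernoff} with $\delta = 1/3 < 1$. Since $(1+\delta)\mu = (4/3)(k/2) = 2k/3$, the bound gives
\[
\Prob\left(d_v \geq \frac{2k}{3}\right) \;\leq\; \exp\!\left(-\frac{\mu \delta^2}{3}\right) \;=\; \exp\!\left(-\frac{(k/2)(1/9)}{3}\right) \;=\; \exp\!\left(\frac{-k}{54}\right).
\]
This is exactly the per-vertex bound we need; the constant $1/54$ in the statement is chosen precisely so that the $\delta=1/3$ case of the Chernoff inequality works out.

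Finally, by a union bound over the at most $n-k \leq n$ non-clique vertices,
\[
\Prob\!\left(d \geq \frac{2k}{3}\right) \;=\; \Prob\!\left(\exists v \in [n]\setminus K : d_v \geq \frac{2k}{3}\right) \;\leq\; (n-k)\exp\!\left(\frac{-k}{54}\right) \;\leq\; n\exp\!\left(\frac{-k}{54}\right),
\]
which is the claimed bound. There is no real obstacle here: the statement is essentially a textbook application of Chernoff plus a union bound, and the only thing to be careful about is choosing the correct form of the Chernoff inequality (the $\exp(-\mu\delta^2/3)$ version noted in the footnote of Lemma~\ref{lem:chernoff}) so that the constant $54$ in the exponent matches.
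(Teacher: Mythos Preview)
Your proof is correct and follows exactly the paper's approach: apply the Chernoff bound (with $\mu=k/2$, $\delta=1/3$) to a single non-clique vertex, then union-bound over the at most $n$ non-clique vertices. The paper's own proof is just a one-line sketch of precisely this argument.
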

\begin{proof}
	A Chernoff bound
	(Lemma~\ref{lem:chernoff}) shows that any given non-clique vertex has is connected to more than $\frac{2k}{3}$ clique vertices with probability at most $\exp\left(\frac{-k}{54}\right)$ and a union bound over the at most $n$ non-clique vertices then finished the proof.
\end{proof}

For the convenience of the reader, we present a proof of the well known fact that Erd\H{o}s-R\'enyi graphs do not have large cliques. See, for example, \cite{bollobas1976cliques}.
\begin{lemma}\label{lem:gnpcliquesize}
Let $G \sim \G(n, \frac{1}{2})$ and $\epsilon >0$ be a positive constant. Except with probability at most $O\left(2^{-\epsilon \log^2 n}\right)$, $G$ contains no cliques of size $(2+\epsilon) \log n$ or larger.
\end{lemma}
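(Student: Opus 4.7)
The plan is a standard first-moment argument. Let $s = \lceil (2+\epsilon)\log n \rceil$, and let $X$ denote the number of $s$-cliques in $G \sim \G(n, \tfrac{1}{2})$. Any clique of size larger than $s$ contains an $s$-subclique, so it suffices to show that $X = 0$ with probability at least $1 - O(2^{-\epsilon \log^2 n})$, and by Markov's inequality this reduces to bounding $\mathbb{E}[X]$.

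The expectation is computed by linearity over all $\binom{n}{s}$ candidate vertex subsets, each of which forms a clique with probability $2^{-\binom{s}{2}}$. Thus
\[
\mathbb{E}[X] \;=\; \binom{n}{s} 2^{-\binom{s}{2}} \;\leq\; n^{s} \cdot 2^{-s(s-1)/2} \;=\; 2^{\,s\log n - s(s-1)/2}.
\]
Plugging in $s \geq (2+\epsilon)\log n$, the exponent becomes $s\left(\log n - \tfrac{s-1}{2}\right) \leq s\left(-\tfrac{\epsilon}{2}\log n + \tfrac{1}{2}\right)$, which for $n$ large is at most $-\tfrac{\epsilon(2+\epsilon)}{2}\log^2 n + O(\log n) \leq -\epsilon \log^2 n$, giving $\mathbb{E}[X] \leq 2^{-\epsilon \log^2 n}$ after absorbing the lower-order term into the $O(\cdot)$.

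Applying Markov, $\Pr(X \geq 1) \leq \mathbb{E}[X] = O(2^{-\epsilon \log^2 n})$, and on the complement event $G$ has no $s$-clique (hence no larger clique), which is the claim. There is no real obstacle here — the only mild care needed is to choose $s$ to be an integer at least $(2+\epsilon)\log n$ and to verify that the lower-order $+s/2$ term in the exponent of $\mathbb{E}[X]$ is dominated by the quadratic loss $-\tfrac{\epsilon}{2}s\log n$; this is immediate for all sufficiently large $n$.
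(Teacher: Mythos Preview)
Your proof is correct and follows essentially the same approach as the paper: a first-moment/union-bound argument over all $\binom{n}{s}$ subsets of size $s = (2+\epsilon)\log n$, each forming a clique with probability $2^{-\binom{s}{2}}$, yielding the $O(2^{-\epsilon\log^2 n})$ bound. You are in fact slightly more careful than the paper in handling the integer rounding of $s$ and spelling out the exponent arithmetic.
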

\begin{proof}
If $G$ has a clique of size larger than $(2+\epsilon) \log n$, it also has a clique of size $(2+\epsilon) \log n$. By a simple union bound over all vertex subsets of size $(2+\epsilon) \log n$, the probability that $G$ has a clique of this size is at most ${n \choose (2+\epsilon) \log n}2^{-{(2+\epsilon) \log n \choose 2}} = O\left(2^{-\epsilon \log^2 n}\right)$.
\end{proof}

We show the existence of a $O(\log^2 n)$ space recovery algorithm above the information theoretic threshold.
\begin{lemma}[\cite{alon2007testing} reduction + $O(\log^2 n)$ space detection]\ \\
	\label{lem:log2n-recovery}
	Let $\omega(\log n)=k=o(n)$ and $G \sim \G(n, \frac{1}{2}, k) = ([n],E)$. Then there is a deterministic $O(\log^2 n)$ space algorithm that outputs the planted clique except with probability at most $O(n\exp\left(-k/54\right)+n2^{- \Theta(\log^2 n)})$.
\end{lemma}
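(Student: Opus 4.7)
The plan is to combine the Alon-Krivelevich-Sudakov \cite{alon2007testing} detection-to-recovery reduction (already sketched informally in the introduction) with the simple exhaustive-search detection algorithm that succeeds above the information-theoretic threshold. The key observation is that both ingredients can be implemented in $O(\log^2 n)$ space and they compose with no extra space overhead, because the reduction only requires the detection subroutine to be run on vertex-induced subgraphs which can be described implicitly rather than constructed.

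The algorithm is: for each vertex $v \in [n]$, run the exhaustive-search detection algorithm on the induced subgraph $G_v$ whose vertex set is $[n] \setminus (\{v\} \cup N(v))$; output $v$ as a clique vertex if and only if $G_v$ contains no clique of size $(2+\epsilon)\log n$. Restricting attention to $G_v$ is free in space: while enumerating $(2+\epsilon)\log n$-subsets of $[n]$, simply discard any subset containing $v$ or a neighbor of $v$, which is decidable by $O(\log n)$ queries to the adjacency matrix. The detection subroutine itself, as described in the introduction, maintains $(2+\epsilon)\log n$ pointers of $O(\log n)$ bits each to represent the candidate subset and two more $O(\log\log n)$ counters to verify that all $\binom{(2+\epsilon)\log n}{2}$ edges are present, for a total of $O(\log^2 n)$ bits. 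The outer loop over $v$ re-uses all of this space.

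For correctness, condition on the structural event of Lemma~\ref{lem:2kby3}, which fails with probability at most $n\exp(-k/54)$. Under this event, every non-clique vertex $v$ is connected to at most $2k/3$ clique vertices, so $G_v$ contains at least $k/3$ clique vertices; since $k = \omega(\log n)$, for large $n$ we have $k/3 \geq (2+\epsilon)\log n$ and these form a clique of the required size, so the detection test correctly reports that $v \notin K$. For a clique vertex $v$, every clique vertex is a neighbor of $v$, hence $G_v$ is vertex-disjoint from $K$ and its edges are iid $\BER(1/2)$; thus $G_v$ is distributed as an Erd\H{o}s-R\'enyi graph on at most $n$ vertices, and by Lemma~\ref{lem:gnpcliquesize} contains no clique of size $(2+\epsilon)\log n$ except with probability $O(2^{-\epsilon \log^2 n})$. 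A union bound over the at most $n$ clique vertices yields $O(n \cdot 2^{-\Theta(\log^2 n)})$, and combining with the structural event gives the claimed failure probability $O(n\exp(-k/54) + n \cdot 2^{-\Theta(\log^2 n)})$.

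There is no real technical obstacle here; the only mild care required is to make sure the induced-subgraph detection is implemented implicitly (by filtering enumerated subsets through the $v$-non-neighbor test) so that no $\Omega(n)$-bit representation of $G_v$ is ever materialized, and to check that the condition $k = \omega(\log n)$ is strong enough to guarantee $k/3 \geq (2+\epsilon)\log n$ for the non-clique case while still allowing the Lemma~\ref{lem:2kby3} bound $n\exp(-k/54)$ to dominate over $n\cdot 2^{-\Theta(\log^2 n)}$ when $k$ is near $\log n$ (which it does, since both terms are comfortably $o(1)$ under $k = \omega(\log n)$).
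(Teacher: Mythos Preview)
Your proposal is correct and takes essentially the same approach as the paper: use the \cite{alon2007testing} reduction, testing for each $v$ whether the non-neighbourhood $G_v$ contains a clique of size $\Theta(\log n)$, and implement everything implicitly in $O(\log^2 n)$ space. The paper additionally conditions on all degrees being at most $2n/3$ so that $G_v$ has at least $n/3$ vertices before invoking Lemma~\ref{lem:gnpcliquesize}, but as you implicitly observe this is not needed: the union bound behind Lemma~\ref{lem:gnpcliquesize} only gets better when the ambient vertex set shrinks, so applying it with the fixed threshold $(2+\epsilon)\log n$ to $G_v$ on $\leq n$ vertices is valid without any lower bound on $|V(G_v)|$.
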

\begin{proof}
For a vertex $v \in [n]$, denote by $G_v$ the graph induced on the vertex subset formed by removing $v$ and all its neighbours from $[n]$. Assume that every non-clique vertex in $G$ is connected to at most $\frac{2k}{3}$ clique vertices. By Lemma~\ref{lem:2kby3}, this happens except with probability at most $n\exp\left(-k/54\right)$. Further assume that every vertex in $G$ has degree at most $2n/3$. By a union and Chernoff bound, this happens except with probability at most $n\exp(-cn)$ for some constant $c>0$. By a union bound, we can assume that both the structural assumptions we have made hold simultaneously except with probability at most $O(n\exp\left(-k/54\right))$.
	
This means that if $v$ is a clique vertex, $G_v$ is an Erd\H{o}s-R\'enyi graph with no planted clique and at least $n/3$ vertices. By a further union bound and using Lemma~\ref{lem:gnpcliquesize}, we assume the largest clique in $G_v$ for all clique vertices $v$ is less than $3 \log n$. Overall, all our structural assumptions hold except with probability at most $O(n\exp\left(-k/54\right)+n2^{- \Theta(\log^2 n)})$.
	
If $v$ is not a clique vertex, $G_v$ is a planted clique graph with a planted clique of size at least $k/3$. Hence it has a clique of size $3\log n$. We can use this property to distinguish between clique and non-clique vertices.

Our algorithm can use a $O(\log n)$ bit counter to loop over all vertices in $[n]$. For a given vertex $v$, our algorithm says it is not in the planted clique if and only if it finds a clique of size $3 \log n$ in $G_v$. To check this, the algorithm can store $3\log n$ names of vertices (taking $O(\log^2 n)$ bits of space) and loop over all possibilities. If it finds a clique formed by vertices that are all unconnected to $v$, it declares $v$ to be not in the planted clique. To check the existence of a clique for a given set of $3 \log n$ vertices, it only needs a further $O(\log\log n)$ bits of space to loop over all possible edges between this set of vertices. The overall space usage is thus $O(\log^2 n)$ bits.

\end{proof}

\section*{Acknowledgments}
We would like to thank Dean Doron, G\'abor Lugosi, and Kevin Tian for helpful discussions and pointers to relevant literature.



\addcontentsline{toc}{section}{References}
\bibliographystyle{alpha}
\bibliography{ref}




\end{document}